\def\RR{{\mathbb R}}
\def\PP{{ \mathsf{P}}}
\def\totspace{\Xi}
\newcommand{\A}{\mathcal{A}}
\newcommand{\B}{\mathcal{B}}
\newcommand{\G}{\mathcal{G}}
\renewcommand{\L}{\mathcal{L}}
\newcommand{\iid}{\stackrel{\rm iid}{\sim}} 
\newcommand{\ind}{\stackrel{\rm ind}{\sim}}
\newcommand{\OO}{\mathcal{O}}
\renewcommand{\RR}{\mathbb{R}}
\newcommand{\indicator}{\mathbbm{1}}
\def\dist{\mathrm{dist}}
\newcommand{\bv}{\boldsymbol{v}}
\newcommand{\bxi}{\boldsymbol{\xi}}
\newcommand\data{\mathrm{data}}
\newcommand*{\medcap}{\mathbin{\scalebox{1.4}{\ensuremath{\cap}}}}%
\newcommand*{\medcup}{\mathbin{\scalebox{1.4}{\ensuremath{\cup}}}}%
\crefname{assumption}{Assumption}{Assumptions}
\crefname{proposition}{Proposition}{Propositions}
\crefname{theorem}{Theorem}{Theorems}
\crefname{section}{Section}{Section}
\crefname{lemma}{Lemma}{Lemma}
\crefname{algorithm}{Algorithm}{Algorithms}
\crefname{example}{Example}{Examples}
\crefname{figure}{Figure}{Figure}
\crefname{condition}{Condition}{Condition}
\crefname{appendix}{Appendix}{Appendix}
\crefname{equation}{equation}{equation}
\crefname{table}{Table}{Table}
\newtheorem{theorem}{Theorem}
\newtheorem{proposition}[theorem]{Proposition}
\newtheorem{condition}{Condition}[section]
\newtheorem{definition}[theorem]{Definition}
\newcommand{\lap}{\mathrm{Laplace}}
\newcommand{\Nor}{\mathrm{Normal}}
\DeclareMathOperator*{\argmax}{\arg\!\max}
\def\CC{C\texttt{++}}
\def\pidx{l}
\newcommand{\ntotal}{L}  
\newcommand*{\addFileDependency}[1]{
\typeout{(#1)}
\@addtofilelist{#1}
\IfFileExists{#1}{}{\typeout{No file #1.}}
}
\title{Bayesian inferences on uncertain ranks and orderings: Application to ranking players and lineups}
\author{Andr\'es F. Barrientos$^1$,  Deborshee Sen$^2$, Garritt L. Page$^{3,4}$, and David B. Dunson$^5$}
\date{%
$^1$Department of Statistics, Florida State University\\%
$^2$Department of Mathematical Sciences, University of Bath\\%
$^3$Department of Statistics, Brigham Young University\\%
$^4$BCAM - Basque Center for Applied Mathematics\\
$^5$Department of Statistical Science, Duke University
}
\begin{document}

\maketitle

\begin{abstract}

\noindent
It is common to be interested in rankings or order relationships among entities. In complex settings where one does not directly measure a univariate statistic upon which to base ranks, such inferences typically rely on statistical models having entity-specific parameters. These can be treated as random effects in hierarchical models characterizing variation among the entities. In this paper, we are particularly interested in the problem of ranking basketball players in terms of their contribution to team performance. Using data from the National Basketball Association (NBA) in the United States, we find that many players have similar latent ability levels, making any single estimated ranking highly misleading.  The current literature fails to provide summaries of order relationships that adequately account for uncertainty.  Motivated by this, we propose a Bayesian strategy for characterizing uncertainty in inferences on order relationships among players and lineups. Our approach adapts to scenarios in which uncertainty in ordering is high by producing more conservative results that improve interpretability. This is achieved through a reward function within a decision theoretic framework. We apply our approach to data from the 2009-10 NBA season.\\

\noindent 
\textbf{Keywords} ~ 
Bayesian; 
Ordering statements;  
Ranking;  
Decision theory;  
Sports statistics.
\end{abstract}



\section{Introduction}

Making inference on orderings among parameters is of widespread importance. These may represent, for example, abilities of individuals or teams performing a given task \citep{cattelan2013dynamic}, treatment effects in clinical trials \citep{rucker2015ranking}, health and social indices of geographical regions \citep{marriott2017caste}, consumer preferences in search engines \citep{park2015search}, and educational systems \citep{millot2015international}. Standard statistical techniques for inferring such orderings aim to estimate a ranking of parameters. These have classically included ranking the posterior means or using the posterior expected ranks \citep{laird1989empirical, lin2006loss}; we refer the reader to \cite{henderson2016making} for a more nuanced technique. Other methods have been developed based on the type of data at hand.  For example, when data originate directly from paired comparisons between competing units, the Bradley-Terry model (\citealt{bradley1952rank}) has become a popular option to estimate each unit's ``ability'' which are then used to produce a ranking (\citealt{agresti, hunter:2004,caron2012efficient}).  Recent work dedicated to Bradley-Terry type models has focused on generalizations of the base model.  For example,  \cite{dittrich1998modelling}  allow unit-specific covariates to inform ability estimates, and \cite{huang2008ranking} use group information to estimate individual abilities.  When preference data are available, \cite{vitelli2018probabilistic} recently developed a method based on the Mallows model that produces rankings.  In this same setting, \cite{Caron:2014} developed a method that produces rankings based on a Plackett-Luce model.

This article is motivated by our interest in providing an approach for producing ordering statements for entities in professional sports leagues, with a particular focus on the National Basketball Association (NBA) of the United States of America. In what follows, we use the term ``ordering statement'' to mean the collection of rankings of a subset of objects of interest. We are interested in comparing players not on the basis of their individual statistics but in terms of their contribution to team performance.  Our primary dataset is from the 2009-10 NBA season, and our goal is to provide ordering statements associated with individuals' and groups of individuals' (lineups') abilities. This is a particularly challenging case study because  many players have similar abilities and, based on the available data, the abilities need to be estimated from measurements taken at a group level, making them imprecise.   Ranking statements typically summarize orderings in parameters and it can be challenging to assess and summarize uncertainty in such statements. This is especially true in scenarios where parameter values are similar  or the amount of data is limited relative to the number of parameters (as in our case study).  In such settings, it is particularly important to not over-interpret a single estimated ranking but instead to carefully account for uncertainty in rank-based decision making and inferences.

Despite the rich statistical literature on ranking problems, few contributions focus on characterizing uncertainty. In the Bayesian framework, methods for measuring uncertainty include point-wise credible intervals for ranks \citep{rodriguez2015measuring} and identifying rankings with high posterior probability (\citealp{soliman2009ranking, vitelli2018probabilistic}; see also \citealp{jewett2019optimal} for some visual tools). In the frequentist framework, some theoretical contributions focus on studying asymptotic conditions under which point-wise confidence intervals of population ranks have the claimed coverage probability \citep{xie2009confidence}, and under which ranking estimates converge to the truth as the number of parameters and sample size increase to infinity \citep{hall2010modeling}. Other contributions aim to define confidence regions for rankings using multiple confidence intervals or hypothesis tests for the parameters while controlling the family-wise error rate \citep{wright2018primer, klein2018simple, mohamadetal:2021}. However, controlling the family-wise error rate can be challenging and often leads to regions that are so wide as to be practically meaningless. Loss functions have also been used in this regard, with the choice of the loss depending on the ranking problem at hand; examples include ranking by using a weighted loss that improves estimates of extrema \citep{wright2003loss}, finding the correct rank for a particular parameter, finding a parameter corresponding to a particular rank, or correctly ordering a pair of parameters (all in \citealp{jewett2019optimal}), identifying the top or bottom rankings \citep{lin2006loss, henderson2016making}, optimizing the area under a receiver operating characteristic curve \citep{rendle2009bpr}, and learning to rank \citep{chen2009ranking}.

In this paper, we adopt a Bayesian framework and propose an approach for producing ordering statements of parameters along with their associated uncertainty. These ordering statements are not designed to make definitive conclusions about all parameters being compared; instead, they maintain accuracy by remaining silent when uncertainty associated with the ranking is high. These ordering statements thus represent uncertainty more realistically as compared to usual rankings. We achieve this by forming two particular types of ordering statements, which we term local and global statements. Local statements provide orderings relative to individual parameters, and global statements combine local statements.  In our framework, combining local statements to form global statements leads to a potentially massive number of global statements. To address this, we propose a decision-theoretic approach to pick a global statement that balances the number of elements included in the statement and its uncertainty. The proposed approach can be parallelised, which makes it attractive from a computational perspective. We also show that the optimal global statement that maximizes the expected utility function converges to the true ranking. We highlight the fact that our approach is based solely on posterior distribution probabilities.  Thus,  it can be employed regardless of the statistical model used to fit the data.  For this reason, we focus primarily on developing and presenting our approach that produces local and global ordering statements and ancillarily on the model employed to fit the NBA data.

The rest of the paper is organised as follows. \cref{sec:challenges} discusses the challenges of ranking parameters under different scenarios; this is motivated concretely by the NBA application. Because our approach is applicable regardless of the statistical model utilized,  \cref{sec:contributions} first provides formal definitions of local and global statements, introduces the reward function used to find optimal statements, and presents an asymptotic result showing the convergence of the optimal statements to the true ranking.
\cref{sec:data.model} describes the statistical model used in the NBA application, and the rest of \cref{sec:simulation_study} is dedicated to conducting a simulation study to assess the performance of our approach and compare to the cumulative probability consensus ranking described in \cite{vitelli2018probabilistic}. \cref{sec.application} presents results for our analysis of the NBA data.  Computer codes used to carry out the analysis in \cref{sec.application} can be found in our R-package at \url{https://github.com/anfebar/anfebar.github.io/tree/master/Software/BOARS}.
Finally, we end with some concluding remarks in \cref{sec:discussion}.


\section{Motivation through basketball application}\label{sec:challenges}

Since our focus is on NBA performance data, we begin by briefly providing some context. In basketball, a competition between two teams is based on five players within a lineup. A basketball game is a timed collection of competitions between lineups, which we refer to as ``encounters''. The lineup compositions for both teams change frequently during the course of a game based on substitutions made by team coaches. It is common for teams to record a number of metrics during encounters to evaluate lineup performance. Among these are the total number of points scored and the total number of points conceded. In this work, we use the difference of these two totals as a way to measure the ability of a lineup while taking into account the quality of opponent. The available information we have is thus at the lineup level rather than the player level.  

It is often stated that the NBA is a ``player's league'', meaning that a team's success is highly dependent on its top players' talent level.  Being able to discriminate between player abilities and attract high-performing players is thus crucial for general managers/decision makers. In addition, basketball is a sport that requires precise interactions between players, and thus being able to identify collections of players that play effectively together is a coach's key responsibility. This makes it important to be able to order players and lineups. Examples of ordering statements that could prove useful to coaches and general managers are a ranking of all players or lineups across the NBA,  rankings of subsets of players or lineups (within a team or between teams),  pairwise comparisons between players and/or lineups, lists of players or lineups that are at the top or bottom of a team or a subset of teams, et cetera. To obtain such statements, a common approach is to first fit a model that produces a posterior distribution of player/lineup abilities, and then compute a ranking based on either a summary measure of the posterior distribution or by means of loss functions. Examples of summary measures include  posterior means, the so-called $r$-values \citep{henderson2016making}, and posterior expected ranks \citep{laird1989empirical, lin2006loss}, while loss functions are explored in \cite{jewett2019optimal}.

If a posterior summary measure is used to construct an ordering of players/lineups, it is desirable that this be accompanied by a measure of uncertainty.  Since an ordering statement defines a region of the parameter space, a natural measure of uncertainty is the amount of posterior probability that such a region accumulates. Ideally, one would like the posterior probability of an ordering statement to be large and also for it to be precise. For example, it would be undesirable to report a wide credible interval for the position at which a player is ranked even though the posterior probability of the interval is high. \cref{fig.caterpillar} illustrates the challenges of producing precise ordering statements with small uncertainty in our application. This displays a caterpillar plot of each players' average per game difference between points scored and allowed for each encounter in which the player participates. This could be thought of as a type of adjusted plus/minus \citep{sill:2010} in that it takes into account lineup information. It is clear that there is a massive overlap in the average difference in points scored and allowed across players. Reasons for this are: \emph{(i)} player abilities are inferred using measurements collected at an aggregate, that is, lineup level, \emph{(ii)} players have very similar abilities, which is not unexpected for a top-level competition, \emph{(iii)} the number of observations for a player varies widely from around twenty to two thousand, and \emph{(iv)} encounters are often competitive, resulting in differences in points scored close to zero. This overlap makes it so that any ordering statements (as those listed in the previous paragraph) that contain a moderate to large subset of players (and thus lineups) would likely have negligible posterior probability. For this reason, it is imperative to attach a measure of uncertainty to these types of ordering statements so that coaches and general managers are aware that they are making decisions based on statements that may be very uncertain.

\begin{figure}
\centering
\includegraphics[ width=0.7\textwidth, trim={0 0cm 0cm 2cm},clip=true]{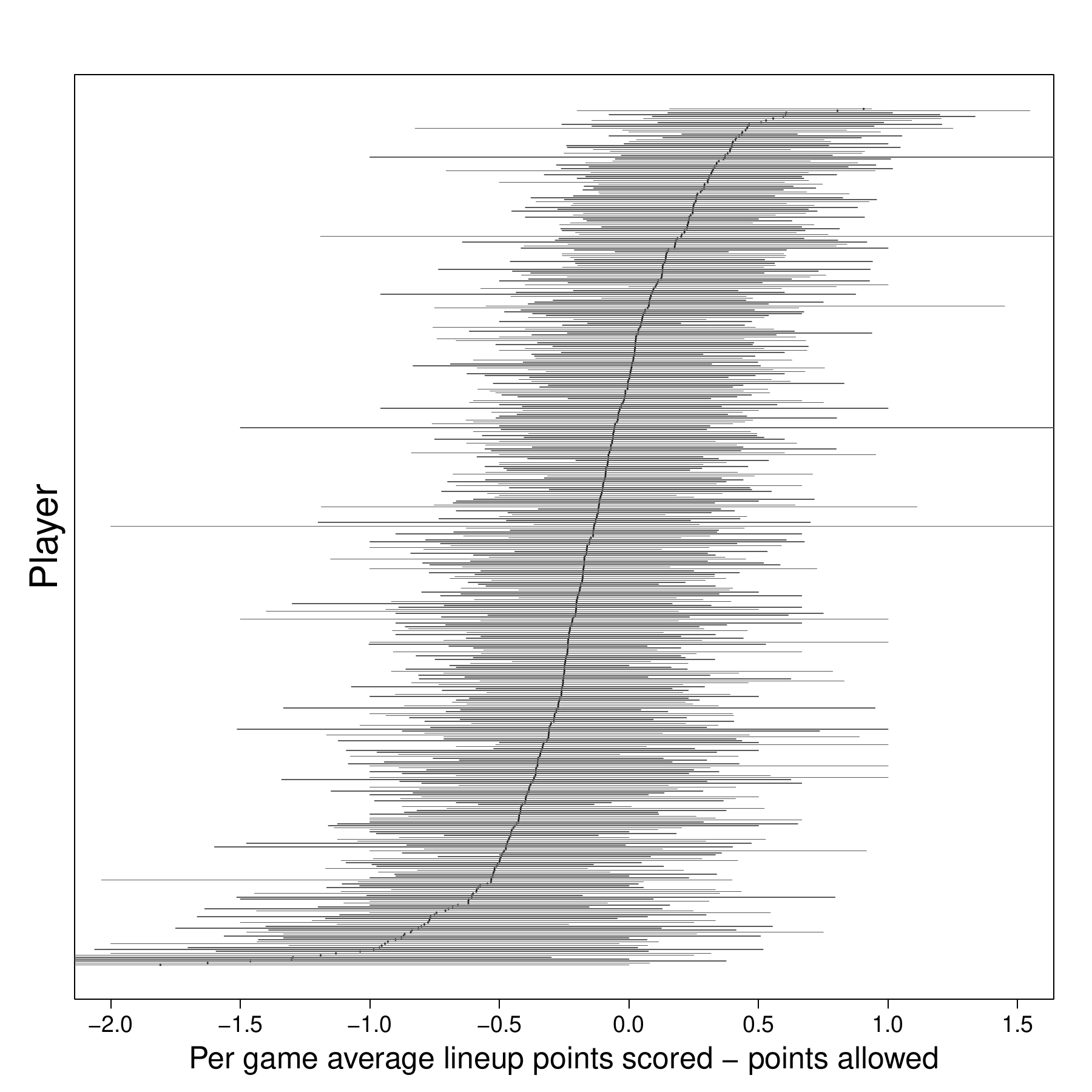}
\caption{Per game average of points scored minus points allowed of each encounter in which a player participated. Each bar represents the the interquartile range of the per game average between points scored and allowed of the all encounters in which a player participated.}
\label{fig.caterpillar}
\end{figure}

In scenarios like our NBA application where any ranking/ordering of a large number of players/lineups is likely to have massive uncertainty, an alternative option is to focus on making statements for a small subset of players/lineups. This can include, for example, pairwise comparisons. Doing so would produce ordering statements that are likely to be less uncertain. However, if a coach/general manager makes decisions based on a collection of statements that consist of a small subset of players/lineups, then in order to correctly quantify uncertainty, they will need to compute the joint posterior probability of each of the ``smaller'' statements occurring simultaneously. Unfortunately, this joint probability is again likely to be low. In \cref{sec:results_players}, we provide some examples of how typical approaches lead to ordering statements which have posterior probability that are essentially zero.

In this article, we propose an approach that attempts to find ordering statements which accumulate a large joint posterior probability. Our approach focuses on exploring the support of the posterior distribution and identifying a region (or regions) in the parameter space for which a non-negligible concentration of posterior probability occurs, and in turn connecting the region to a set of ordering statements. The posterior probability of the identified regions is equal to the joint probability of the corresponding ordering statements, which corresponds to our measure of uncertainty. In addition, not only do we attempt to find an ordering statement on which a non-negligible amount of posterior probability is concentrated, but, if possible,  we find one that involves a reasonable number of players/lineups as well. We achieve this by forming local and global statements, where local statements provide orderings relative to individual parameters, and global statements combine local statements; concrete definitions of these statements are provided in \cref{sec:contributions}. When comparing the abilities of two players/lineups within a local statement, say $\xi_1$ and $\xi_2$, the global statement should refrain from saying whether player 1 is better or worse than player 2 (that is, $\xi_1 > \xi_2$ or $\xi_1 < \xi_2$, respectively) if the posterior probability of the event $\xi_1 > \xi_2$ is not reasonably large or small, respectively. Thus, local statements can end up being empty. When the global statement only involves a few local statements comprised of a few players/lineups, we call it a sparse global statement. If sparsity needs to be reduced, there are two options: \emph{(i)} work with a smaller posterior probability, or \emph{(ii)} introduce a mechanism that allows coaches or managers to introduce some error and accept that some players or lineups may be misordered in some of the ordering statements. Thus, for each ordering statement, we need to provide not only the posterior probability, but also a proper quantification of such error. Our proposed approach to achieve this is both novel and complementary to the existing literature of alternate methods to estimate ordering statements.


Decision makers will be able to employ the ordering statements our approach provides with confidence that uncertainty is explicitly and properly calibrated.  However, it may be the case that comparisons between two or more players/lineups of interest is not available in the global statement, implying that uncertainty associated with the comparisons is large. If this is the case, then the coach/general manager's options are twofold:  \emph{(i)} make a decision based on, for example, the posterior mean of the abilities, $r$-values, posterior expected ranks in the face of high uncertainty (which essentially boils down to a ``guess''), or \emph{(ii)} collect a different set of data that is able to better discriminate between desired players or lineups and run our procedure based on these data. The next section presents a detailed description of our proposed approach. 

Before describing our approach, we briefly mention that analyzing player contributions to lineup effectiveness and overall team strength has appeared in the basketball analytics literature.  A small sampling of these are  \cite{kalman2020nba}, who discover effective lineups based on new position definitions, \cite{page:2007}, use box-score variables to highlight player skills that are conducive to winning games, and \cite{deshpande2016estimating}, who use situational contributions to estimate player quality and produce a player ranking. That being said, we are not aware of any work in sports analytics that addresses uncertainty quantification associated with rankings or orderings.

\section{Bayesian  ordering statements} \label{sec:contributions} 

This section contains the main conceptual contribution of this article. We begin by providing a formal description of the setup we are working under in \cref{sec.setup}. We then introduce notions of local and global statements in \cref{sec.ranking_statements}. These are formed using  loss parameters that users are free to pick according to their needs; this is discussed further in \cref{sec.ranking_statements}. We provide a decision-theoretic approach to tune the loss parameters
in \cref{sec.decision}; this leads to optimal global statements. We show in \cref{sec.asynptotic} that the optimal global statement converges to the true ranking asymptotically as the sample size increases.

\subsection{Setup} \label{sec.setup}

Formally, our goal is to compare $\ntotal$ unknown parameters $\xi_1, \dots, \xi_{\ntotal} \in \totspace$; these represent player abilities in our application, and will also later be used to construct lineup abilities. We use the shorthand notation $\bxi = (\xi_1, \dots, \xi_{\ntotal}) \in \totspace^\ntotal$, and we let $\B(\Xi^L)$ denote the Borel sigma-algebra on $\Xi^L$. We assume that $\totspace^\ntotal$ allows orderings among parameters, that is, for every $(\xi_1, \dots, \xi_{\ntotal}) \in \totspace^\ntotal$ and every pair $(\pidx,\pidx')$, $1 \leq \pidx \neq \pidx' \leq \ntotal$, one of $\xi_{\pidx'} > \xi_\pidx$ or $\xi_\pidx = \xi_{\pidx'}$ or $\xi_\pidx > \xi_{\pidx'}$ is true.

We adopt a Bayesian approach in this article and assume that there is a statistical model with parameters $\bxi$, and we let $\PP$ be the posterior distribution of $\bxi$ 
given observations, where we omit its explicit dependence on observations to keep the notation simple; $\PP$ is defined on the space $(\Xi^L, \B(\Xi^L))$.
It is reasonable to assume that no two players have exactly the same ability. In other words, either $\xi_{\pidx'} > \xi_\pidx$ or $\xi_\pidx > \xi_{\pidx'}$ for every $l \neq l'$ and $\bxi \in \Xi^L$, and we shall assume this in the remainder of the article.

Throughout this article, we shall be interested in making ordering statements among individual parameters, which we formally define as follows. 
\begin{definition}
An ordering statement is an event in $\B(\Xi^L)$ that establishes order relations among (a subset of) individual parameters. No order relations are assumed among the remaining parameters.
\end{definition}
A trivial ordering statement is the empty statement, that is, entire space $\Xi^L$. While this has posterior probability one, it provides no information among the relative orderings of any individual parameters. 
The simplest meaningful ordering statements are those which order only two individual parameters, for example, $\{\bxi \in \Xi^L : \xi_1<\xi_2, \, \xi_3,\dots,\xi_L \in \Xi\}$. For simplicity, we shall just write this as $(\xi_1<\xi_2)$ and not make explicit that $\xi_3,\dots,\xi_L \in \Xi$. 
An example of a strict ordering statement is $\xi_1 < \cdots < \xi_L$, which provides an ordering among all the individual parameters; this typically has negligible posterior probability. The above examples are of relatively simple ordering statements, and we shall construct more sophisticated ordering statements in the sequel.

In the remainder of this section, we fix the posterior distribution $\PP$ and construct statements based on it.

\subsection{From local to global statements} \label{sec.ranking_statements}

We begin by considering pairwise comparisons. 
We define events $E_{\pidx,\pidx'} = (\xi_\pidx > \xi_{\pidx'})$,  $1 \leq \pidx \neq \pidx' \leq \ntotal$ and call these elementary statements because they compare pairs of parameters. These will be used as building blocks in the sequel. In particular, $E_{\pidx,\pidx'}$ denotes the event that the $\pidx$th parameter is ordered higher than the $\pidx'$th parameter. We then choose a lower bound $(1-\alpha)$ for the posterior probability of the elementary statements for $\alpha \in [0,1]$, we define sets
\begin{align} \label{eq.Adef}
\begin{aligned}
\overline{\A}_{\pidx,\alpha}
& = 
\left \{ \pidx' \in \{1, \dots, L\} \, : \, \PP (E_{\pidx',\pidx}) > 1-\alpha \right \}, 
\\
\underline{\A}_{\pidx,\alpha}
& = 
\left \{ \pidx' \in \{1, \dots, L\} \, : \, \PP (E_{\pidx,\pidx'}) > 1-\alpha \right \}.
\end{aligned}
\end{align}
%
The set $\overline{\A}_{\pidx,\alpha}$ consists of all parameters that are ordered higher than the $\pidx$th parameter with posterior probability at least $(1-\alpha)$, and the set $\underline{\A}_{\pidx,\alpha}$ consists of all parameters that are ordered lower than the $\pidx$th parameter with posterior probability at least $(1-\alpha)$; recall that we have fixed the posterior distribution before constructing statements.  In other words, the sets  $\underline{\A}_{\pidx,\alpha}$ and $\overline{\A}_{\pidx,\alpha}$ are sets of players whose posterior probability of having an ability different from that of player $l$ is high.
We order the $\pidx$th parameter only with respect to parameters in $\underline{\A}_{\pidx,\alpha}$ and $\overline{\A}_{\pidx,\alpha}$, since ordering  with respect to other parameters has a high degree of uncertainty.

We now define the following local statements which involve multiple elementary comparisons.  We use the notation $|\A|$ to denote the cardinality of a finite set $\A$.  
\begin{enumerate}
\item \textbf{Statement $A_{\pidx,\alpha}$}:
parameter $\pidx$ is ordered higher than all parameters in $\underline{\A}_{\pidx,\alpha}$ and ordered lower than all parameters in $\overline{\A}_{\pidx,\alpha}$, that is,
\begin{align*}
A_{\pidx,\alpha} 
& = 
(\medcap_{\pidx' \in \underline{\A}_{\pidx,\alpha} } E_{\pidx,\pidx'} ) \medcap ( \medcap_{\pidx' \in \overline{\A}_{\pidx,\alpha} } E_{\pidx', \pidx} ).
\end{align*}
In other words, $A_{\pidx,\alpha}$ is the event where all the elementary statements associated with $\underline{\A}_{\pidx,\alpha} \cup \overline{\A}_{\pidx,\alpha}$ simultaneously hold.

\item  
\textbf{Statement $A_{\pidx,\alpha, t}$}: at least $(1-t) \times 100$ percent of the elementary statements in $A_{\pidx,\alpha}$ are $\alpha$-correct for $t \in [0,1]$, where $\alpha$-correct refers to an elementary statement $E_{l, l'}$ or $E_{l',l}$ such that $\PP(E_{l, l'}) > (1-\alpha)$ or $\PP(E_{l',l}) > (1 - \alpha)$, respectively; we will refer to $t$ as the local error. 
%
%
Formally,
\begin{align} \label{eq.Adef2}
A_{\pidx,\alpha, t}
= 
\medcup_{ \scriptsize
(\underline{\A},\overline{\A}) \in \mathscr{A}_{\pidx,\alpha,t}
}
(\medcap_{\pidx' \in \underline{\A} } E_{\pidx,\pidx'} ) \medcap ( \medcap_{\pidx' \in \overline{\A}} E_{\pidx', \pidx} ),
\end{align}
where
\begin{equation*}
\mathscr{A}_{\pidx,\alpha,t} = 
\left\{ (\underline{\A},\overline{\A})\, :\,
\underline{\A} \subseteq \underline{\A}_{\pidx,\alpha}, \,
\overline{\A} \subseteq \overline{\A}_{\pidx,\alpha}, \,
|\underline{\A}| + |\overline{\A}| 
\geq (1-t) \times | \underline{\A}_{\pidx,\alpha} \cup \overline{\A}_{\pidx,\alpha} | 
\right\}.
\end{equation*}
Notice that $\underline{\A} \cup \overline{\A}$ with $(\underline{\A},\overline{\A}) \in \mathscr{A}_{\pidx,\alpha,t}$ is a collection of indices with cardinality greater than or equal to $(1-t) \times | \underline{\A}_{\pidx,\alpha} \cup \overline{\A}_{\pidx,\alpha} |$ and for which either $\PP (E_{\pidx',\pidx}) > 1-\alpha$ or $\PP (E_{\pidx',\pidx}) > 1-\alpha$.
\end{enumerate}

$A_{l,\alpha}$ is the subset of $\Xi^L$ where $\xi_{l'} < \xi_l$ for all $l' \in \underline{\A}_{l,\alpha}$ and $\xi_{l} < \xi_{l'}$ for all $l' \in \overline{\A}_{l,\alpha}$; where we note that the sets $\underline{\A}_{l,\alpha}$ and $\overline{\A}_{l,\alpha}$ were chosen based on the posterior distribution $\PP$.
This is relaxed in the event $A_{l, \alpha, t}$, which does not require all the elementary statements in $A_{l,\alpha}$ to hold.
The posterior probability of at least one of the elementary statements in $A_{\pidx,\alpha}$ being $\alpha$-incorrect is a non-decreasing function of the number of such statements in $A_{\pidx,\alpha}$, and therefore $\PP(A_{\pidx,\alpha})$ is a non-increasing function of the number of elementary statements in $A_{\pidx,\alpha}$. Since our final aim is to define statements having high posterior probability, statement $A_{\pidx,\alpha, t}$ considers the same elementary statements as $A_{\pidx,\alpha}$ but also allowing for a local error determined by $t$. 

To create statements involving multiple local statements for a fixed $t$ and $\alpha$, we consider only those players whose corresponding $A_{\pidx,\alpha, t}$ have posterior probability lower-bounded by $(1-\gamma)$ for $\gamma \in [0,1]$,
%
\begin{align} \label{eq.Gdef}
\G_{\alpha,t, \gamma}
& = 
\left \{ l \in \{1, \dots, L\} \, : \, \PP \left ( A_{\pidx,\alpha, t} \right ) \geq 1-\gamma \right \}.
\end{align}
We now define the following global statements. 
\begin{enumerate}
\item \textbf{Statement $G_{\alpha,t, \gamma}$}: statements 
$A_{\pidx,\alpha, t}$ simultaneously hold for all $\pidx \in \G_{\alpha,t, \gamma}$,  that is, $G_{\alpha,t, \gamma} = \medcap_{\pidx \in \G_{\alpha,t, \gamma}} A_{\pidx,\alpha, t}$.  In other words, $G_{\alpha,t, \gamma}$ is the event where all the local statements  associated with  $ \G_{\alpha,t, \gamma} $ simultaneously hold.

\item 
\textbf{Statement $G_{\alpha,t, \gamma, q}$}: 
at least $(1-q) \times 100$ percent of the local statements comprising $A_{\pidx, \alpha, t}$ are $\gamma$-correct for $q \in [0,1]$, where $\gamma$-correct refers to a statement $A_{l,\alpha,t}$ such that $\PP ( A_{\pidx,\alpha,t}) \geq 1-\gamma$; we will refer to $q$ as the global error. 
Formally, 
\begin{align} \label{eq.global_statement}
G_{t,\alpha,\gamma,q}
= 
\medcup_{ \scriptsize
\G \in \mathscr{G}_{\alpha,t,\gamma,q}
}
\medcap_{\pidx \in \G} A_{\pidx,\alpha, t},
\end{align}
where
\begin{equation*}
\mathscr{G}_{\alpha,t,\gamma,q} = 
\left\{ \G\, :\,
\G \subseteq \G_{\alpha,t, \gamma}, \,
|\G| \geq  (1-q) \times |\G_{\alpha,t, \gamma}|
\right\}.
\end{equation*}
Notice that $\G \subseteq \G_{\alpha,t, \gamma}$ is a collection of indices $\pidx'$ with cardinality greater  than or equal to $  (1-q)  \times |\G_{\alpha,t, \gamma}|$ and for which  $\PP \left ( A_{\pidx,\alpha, t} \right ) \geq 1-\gamma$.

\end{enumerate}
Recall that we choose the set $\G_{\alpha,t,\gamma}$ based on the posterior probability of events $A_{l,\alpha,t}$. Having chosen $\G_{\alpha,t,\gamma}$, $G_{\alpha,t, \gamma}$ is then the subset of $\Xi^L$ where the events $A_{l,t,\alpha}$ happen for all $l \in \G_{\alpha,t,\gamma}$. This is relaxed in the global statement $G_{\alpha,t, \gamma, q}$, which requires only a proportion of the events $A_{l,t,\alpha}$ for $l \in \G_{\alpha,t,\gamma}$ to happen.

The global statement $G_{\alpha,t, \gamma, q}$ allows a global error $q$ that plays a role similar to that of $t$ at the local level. These global statements are of principal interest as they can be employed to produce rankings. \cref{app:algorithm} provides an algorithm to find local and global statements along with the corresponding posterior probabilities using samples drawn from $\PP$. 

One of our goals is to produce global statements that balance sparsity and the desire for statements having high posterior probability,  which can be achieved through specific choices of $(\alpha,t, \gamma, q)$. The following inequalities provide intuition about the relation between $(\alpha,t, \gamma, q)$ and the posterior probability of global statements and local statements.
\begin{enumerate}
\item 
$\PP(G_{\alpha,t, \gamma, q}) \geq \PP(G_{\alpha,t, \gamma, q'})$ for $q\geq q'$, that is, global statements with smaller global error have smaller posterior probability.
\item
$\PP(G_{\alpha,t, \gamma, q=0}) = \PP(G_{\alpha,t, \gamma}) \geq 1 - \gamma |\G_{\alpha,t,\gamma}|$, that is, when the global error is equal to zero, the posterior probability of the global statement increases as $\gamma$ or the number of local statements in $G_{\alpha,t,\gamma}$ decreases.
\item
$\PP(A_{\pidx,\alpha,t}) \geq \PP(G_{\alpha,t, \gamma})$ if $\pidx \in \G_{\alpha,t, \gamma}$, that is,  when the global error is zero, the posterior probability of the global statement is upper bounded by the posterior probability of any local statement in $G_{\alpha,t, \gamma}$.
\item
$\PP(A_{\pidx,\alpha,t=0}) = \PP(A_{\pidx,\alpha}) \geq 1 - \alpha |\underline{\A}_{\pidx,\alpha} \cup \overline{\A}_{\pidx,\alpha}|$, that is, when the local error is equal to zero, the posterior probability of the local statement increases as $\alpha$ or the number of elementary statements in $A_{\pidx,\alpha}$ decreases.
\end{enumerate}
Moreover, for fixed $q$ and $t$, the number of elementary statements in  $\A_{\pidx,\alpha,t}$ and local statements in $\G_{\alpha,t, \gamma, q}$ decreases as either $\alpha$ or $\gamma$ decreases. We can thus obtain less sparse statements by setting $\alpha$ and $\gamma$ at large values, which in turn increases the posterior probability of the global statement. The roles of $q$ and $t$ are less obvious.  For example, if we increase $q$ and $t$, it is not clear how this will impact the posterior probability of $G_{\alpha,t, \gamma, q}$.  On one hand,  the posterior probability of the local statements will increase leading to less sparse global statements which may decrease its posterior probability; while on the other hand, an increase of $q$ will also allow more errors which may result in an increase in the posterior probability. To deal with the effect of variations of $(\alpha,t, \gamma, q)$ on the posterior probability and on sparsity of $\G_{\alpha,t, \gamma, q}$, we introduce a decision theoretic framework in the next \cref{sec.decision} to select values of $(\alpha,t, \gamma, q)$, and illustrate their impact on the ordering statements in \cref{sec.application}.  

\subsection{Decision theoretic framework} \label{sec.decision}

In the previous section, we defined a hierarchy of ranking statements (elementary, local, and global). Our goal is to identify global statements having high posterior probability while simultaneously including as many local statements as possible.  To this end, we formulate this as an optimization problem within a decision theoretic framework. 
We use a reward function and assume that the space of actions corresponds to the set of all possible global statements, that is, each action is equivalent to a global statement. We can in principle construct global statements by putting together any arbitrary combination of elementary and local statements without using a posterior distribution. Since the space of actions is finite, we should then compute the reward function for each of these statements and choose the one that maximises the reward. However, the space of actions is massive, and it turns out to be infeasible to evaluate the reward function for all actions. For computational efficiency, we discard in advance actions that have low posterior probability. This is why we consider global statements constructed using the posterior distribution, that is, statements of the form $G_{\alpha,t, \gamma, q}$ with non-negligible posterior probability. We remark that since we only consider global statements constructed using the posterior distribution (in particular, using the data), our strategy is not strict decision theoretic approach, but rather an approximate one.


When optimizing the reward function, locally, we seek statements that maximise the number of elementary comparisons, which is given by $|\underline{\A}_{\pidx,\alpha} \cup \overline{\A}_{\pidx,\alpha}|$ for the $l$th parameter.
However, each local statement $A_{\pidx,\alpha,t}$ allows an error that depends on $t$, and we seek to simultaneously minimize this.  Globally, we seek to maximize the number of local statements $|\G_{\alpha, t, \gamma}|$ while minimizing the error $q$. Finally, we want the posterior probability of the final global statement $G_{\alpha, t, \gamma, q}$ to be high. With this in mind and defining $a = (\alpha, t, \gamma, q) \in [0,1]^4$ to be an action (that in turn leads to a statement $G_{\alpha,t, \gamma, q}$), we couch this in a decision theoretic framework and seek to design a reward function satisfying the following \cref{reward_cond}.

\begin{condition} \label{reward_cond}
The reward function assessing $G_{\alpha, t, \gamma, q}$ satisfies the following.

\begin{enumerate}
\item 
It is a non-decreasing function of $|\underline{\A}_{\pidx,\alpha} \cup \overline{\A}_{\pidx,\alpha}|$  for $\pidx \in \G_{\alpha,t,\gamma}$, and of $|\G_{\alpha,t,\gamma}|$.

\item 
It is a non-increasing function of $t$ and $q$.

\item 
It is a non-decreasing function of $\PP(G_{\alpha, t, \gamma, q})$.

\end{enumerate}
\end{condition} 

An example of such a reward function is 
\begin{align} \label{eq.reward_fn}
R(a) 
= 
R (\alpha, t, \gamma, q) 
= C(G_{\alpha,t,\gamma,q})\times
\PP(G_{\alpha,t,\gamma,q}),
\end{align}
where 
\begin{align} \label{eq.cost}
\begin{aligned}
C(G_{\alpha,t,\gamma,q}) 
& =
\lfloor (1-q) \times |\G_{\alpha,t, \gamma}|\rfloor 
\times 
\left \{ \sum_{\pidx \in \G_{\alpha,t,\gamma}}
\lfloor (1-t) \times |
\underline{\A}_{\pidx,\alpha} \cup \overline{\A}_{\pidx,\alpha}
| \rfloor
\right \}
\end{aligned}
\end{align}
satisfies the first two parts of \cref{reward_cond}. In this case, the factor $C(G_{\alpha,t,\gamma,q})$ equals the $(1-q)\times 100$ percent of the number of local statements within $G_{\alpha,t,\gamma,q}$ multiplied by $(1-t)\times 100$ percent of the total number of elementary statements within the local statements comprising $G_{\alpha,t,\gamma,q}$. Thus, adding (or removing) local statements in $G_{\alpha,t,\gamma,q}$ will raise (or lower) $C(G_{\alpha,t,\gamma,q})$ by $(1-q)\times 100$ percent of the number of added (or removed) local statements multiplied by $(1-t)\times 100$ percent of the total number of elementary statements within the added local statements. If the number elementary statements increases (or decreases) within a local statement already in $G_{\alpha,t,\gamma,q}$, the factor $C(G_{\alpha,t,\gamma,q})$ will change by $(1-t)\times 100$  percent of the number of added (or removed) elementary statements.   It is worth noting that the reward function in \cref{eq.reward_fn} changes in direct proportion to $\PP(G_{\alpha,t,\gamma,q})$ and $C(G_{\alpha,t,\gamma,q}) $. If either $C(G_{\alpha,t,\gamma,q})$ or $\PP(G_{\alpha,t,\gamma,q})$ change in $\pm 1$\%, the reward function will also change in $\pm 1$\%. Changes in reward function values are primarily governed by the trade-off between $C(G_{\alpha,t,\gamma,q}) $ and $\PP(G_{\alpha,t,\gamma,q})$ because less sparse global statements (i.e., larger $C(G_{\alpha,t,\gamma,q}) $) leads to smaller posterior probabilities (i.e., smaller $\PP(G_{\alpha,t,\gamma,q})$). Depending on the application, the dynamics of this trade-off will differ.

With an abuse of notation, the function $C$ is interpreted as a measure of the quality of a global statement $G$ comprised of local statements as described in \cref{sec.ranking_statements}; this does not necessarily have to be constructed using the parameters $(\alpha, t, \gamma, q)$. Restricting ourselves to the set of global statements that are indeed constructed using parameters $(\alpha, t, \gamma, q)$, the reward function \eqref{eq.reward_fn} corresponds to the posterior mean of a utility function of the form $u\{(\xi_1, \dots, \xi_\ntotal), a )\} =  C(G_{\alpha, t, \gamma,q})\times
\indicator ( G_{\alpha, t, \gamma,q} )$. Doing this allows us to reduce the space of all possible global statements to a more manageable size and constrain the optimization to statements having high posterior probability. 

An efficient algorithm to approximate the optimizer of the reward function $\widehat{a} = (\widehat{\alpha}, \widehat{t}, \widehat{\gamma}, \widehat{q}) = \argmax_{a \in [0,1]^4} R(a)$ is provided in \cref{app:opt}. \cref{sec.asynptotic} provides asymptotic guarantees showing that, under this strategy, the optimal global statement $G_{\widehat{\alpha}, \widehat{t}, \widehat{\gamma}, \widehat{q}}$ converges to the ``true'' ranking. Since users may also be interested in fixing local and global errors at desired values or making statements that ensure a minimum posterior probability, the algorithm in \cref{app:opt} also enables users to approximate optimal global statements over any subset of $(\alpha, t, \gamma, q)$ such that $\PP(G_{\alpha, t, \gamma, q})$ is above a given value.

The computational cost of finding a global statement given $M$ samples drawn from the posterior distribution is $\OO(M \ntotal^2)$; this is because we have to consider all possible pairwise comparisons to form the sets $E_{\pidx,\pidx'}$, $\overline{\A}_{\pidx,\alpha}$ and $\underline{\A}_{\pidx,\alpha}$. However, this can be trivially parallelised since they can be computed independently of each other for each parameter. Having written code in \textsf{R} and \CC, even without parallelisation, it takes only a few minutes to compute global statements for up to $4 \times 10^3$ parameters and $2 \times 10^3$ posterior samples. Precise results for computational times required to approximate the optimal global statements are provided in \cref{sec:simulation.resuls} for a simulation study, and in Sections \ref{sec:results_lineups} and \ref{sec:results_players} for the NBA application.

\subsection{Large data limit} \label{sec.asynptotic}

We study the asymptotic behavior of the proposed procedure when there is a ``true'' underlying $\bxi^\star = (\xi_1^\star, \dots, \xi_{\ntotal}^\star)$ that generates the data and the probability measure $\PP$ concentrates around this truth as the amount of data increases. This is the case when $\PP$ is a consistent posterior distribution. 
Writing the measure $\PP$ as a function of $N$ as $\PP^{(N)}$, where $N$ is the number of observations, we assume that the following consistency condition holds:
\begin{align} \label{eq.consistency}
\PP^{(N)} \left \{ \bxi \in B_\epsilon(\bxi_\star) \right \}
\rightarrow 
1 
~~ \text{in}~ \mathbb{P}_{\bxi^\star}\text{-probability for any} ~ \epsilon > 0,
\end{align}
where $B_\epsilon(\bxi_\star)$ denotes a ball around $\bxi^\star$ of radius $\epsilon>0$, $B_\epsilon(\bxi_\star) = \left \{ \bxi \in \totspace^\ntotal \, : \, \dist (\bxi, \bxi_\star) \leq \epsilon \right \}$, $\dist(\cdot, \cdot)$ denotes a distance measure on $\totspace^\ntotal$, and $\mathbb{P}_{\bxi^\star}$ denotes the data-generating mechanism under parameter value $\xi^\star$. In the following \cref{thm.asymptotic}, we show that our method is guaranteed to produce a traditional ranking with high posterior probability as $N \rightarrow \infty$ while assuming that, without loss of generality, $\xi^\star_1 \prec \cdots \prec \xi^\star_{\ntotal}$.

\begin{proposition}\label{thm.asymptotic}
Let $\bxi^\star$ be such that $\xi^\star_1 \prec \cdots \prec \xi^\star_{\ntotal}$. If the distribution $\PP^{(N)}$ concentrates at $\bxi^\star$ in $\mathbb{P}_{\bxi^\star}$-probability as $N \to \infty$ (that is, \cref{eq.consistency} holds), the optimal action $(\widehat{\alpha}^{(N)}, \widehat{t}^{(N)}, \widehat{\gamma}^{(N)}, \widehat{q}^{(N)})$ converges to $(\widehat{\alpha}^{(\infty)}, \widehat{t}^{(\infty)}, \widehat{\gamma}^{(\infty)}, \widehat{q}^{(\infty)}) \equiv (0,0,0,0)$ in $\mathbb{P}_{\bxi^\star}$-probability, with $G_{0,0,0,0}$ stating that $\xi_1 \prec \cdots \prec \xi_{\ntotal}$.
\end{proposition}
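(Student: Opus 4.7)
The plan is to propagate posterior consistency \eqref{eq.consistency} through the hierarchy of statements (elementary $\to$ local $\to$ global) and then evaluate the reward \eqref{eq.reward_fn} at a carefully chosen candidate action. Since $\xi^\star_1 \prec \cdots \prec \xi^\star_L$ is strict, the indicator $\bxi \mapsto \indicator(\xi_l \succ \xi_{l'})$ is continuous at $\bxi^\star$ for every pair $(l,l')$, so \eqref{eq.consistency} yields $\PP^{(N)}(E_{l,l'}) \to \indicator(\xi^\star_l \succ \xi^\star_{l'})$ in $\mathbb{P}_{\bxi^\star}$-probability, jointly across the finitely many pairs on an event $\Omega_N$ with $\mathbb{P}_{\bxi^\star}(\Omega_N) \to 1$. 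On $\Omega_N$ and for any fixed $\alpha \in (0,1/2)$, the sets $\overline{\A}_{l,\alpha}$ and $\underline{\A}_{l,\alpha}$ coincide with the oracle sets $\{l' : \xi^\star_{l'} \succ \xi^\star_l\}$ and $\{l' : \xi^\star_l \succ \xi^\star_{l'}\}$, so $|\overline{\A}_{l,\alpha} \cup \underline{\A}_{l,\alpha}| = L - 1$. A union bound over the $L - 1$ elementary statements inside $A_{l,\alpha,0}$ gives $\PP^{(N)}(A_{l,\alpha,0}) \geq 1 - (L-1)\alpha$, so for $\gamma > (L-1)\alpha$ every $l$ enters $\G_{\alpha,0,\gamma}$; a second union bound yields $\PP^{(N)}(G_{\alpha,0,\gamma,0}) \geq 1 - L(L-1)\alpha$. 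Along any sequence $\alpha_N \downarrow 0$ slower than $\{L(L-1)\}^{-1}$, this probability tends to one, and on that event the full ordering $\xi_1 \prec \cdots \prec \xi_L$ holds, which identifies the limiting statement $G_{0,0,0,0}$.

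I next evaluate the reward at the candidate action $(\alpha_N, 0, \gamma_N, 0)$. Direct computation gives
\[
C(G_{\alpha_N,0,\gamma_N,0}) = \{h(L) - h(0)\} \cdot L \cdot \{h(L-1) - h(0)\} =: C^\star,
\]
which is the largest value $C$ can take over all actions: trivially $|\G_{\alpha,t,\gamma}| \leq L$ and $|\overline{\A}_{l,\alpha} \cup \underline{\A}_{l,\alpha}| \leq L - 1$, while the subtracted terms $h(\lfloor q|\G|\rfloor)$ and $h(\lfloor t|\cdot|\rfloor)$ are minimized at $t = q = 0$ by monotonicity of $h$. Since $\PP^{(N)}(G_{\alpha_N,0,\gamma_N,0}) \to 1$ while $\PP \leq 1$ bounds any other action's reward by $C^\star$, any maximizer $\widehat{a}^{(N)}$ must asymptotically achieve both $C = C^\star$ and $\PP(G) \to 1$. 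The former forces $\widehat{t}^{(N)}, \widehat{q}^{(N)} \to 0$ in probability (under strict monotonicity of $h$, or the tie-breaking rule otherwise) and forces $\widehat{\alpha}^{(N)}, \widehat{\gamma}^{(N)}$ to be small enough that the oracle sets are captured.

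The delicate step, which I expect to be the main obstacle, is upgrading ``small enough to capture the oracle sets'' to ``$\to 0$'', because the reward is flat in $\alpha, \gamma$ once they fall below some threshold. My plan is to appeal to the tie-breaking convention implicit in $\argmax$ (as realized by the algorithm of \cref{app:opt}), which selects the infimum among maximizers: for any fixed $\epsilon > 0$, on $\Omega_N$ with $N$ large every action in $[0,\epsilon]^2 \times \{0\}^2$ whose $\alpha,\gamma$ still exceed the (shrinking) posterior thresholds needed to capture the oracle sets attains reward $C^\star$, so the selected argmax lies in $[0,\epsilon]^4$ with probability tending to one. Sending $\epsilon \downarrow 0$ yields $(\widehat{\alpha}^{(N)}, \widehat{t}^{(N)}, \widehat{\gamma}^{(N)}, \widehat{q}^{(N)}) \to (0,0,0,0)$ in $\mathbb{P}_{\bxi^\star}$-probability; along this convergence the realized global statement coincides with $\{\xi_1 \prec \cdots \prec \xi_L\}$ on $\Omega_N$, which gives the second claim.
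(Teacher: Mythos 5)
Your proposal follows essentially the same route as the paper's proof: use consistency to push all the posterior mass onto the event $\{\xi_1 \prec \cdots \prec \xi_{\ntotal}\}$, deduce that the sets $\overline{\A}_{\pidx,\alpha}$ and $\underline{\A}_{\pidx,\alpha}$ become the oracle comparison sets, conclude via union bounds that $\G_{\alpha,t,\gamma}=\{1,\dots,\ntotal\}$ and $\PP^{(N)}(G_{\alpha,t,\gamma,q})\to 1$, and then argue that the reward is maximized at the oracle configuration with $t=q=0$. The two union bounds you use are exactly inequalities 2 and 4 listed in \cref{sec.ranking_statements}. Where you genuinely diverge is at the step you yourself flag as delicate: the paper disposes of $\alpha$ and $\gamma$ by asserting that the reward is a decreasing function of $\alpha$ and $\gamma$, which is not justified (the factor $C$ in \cref{eq.cost} is non-\emph{decreasing} in $\alpha$, since larger $\alpha$ enlarges $\overline{\A}_{\pidx,\alpha}\cup\underline{\A}_{\pidx,\alpha}$, while $\PP(G)$ moves the other way); in the asymptotic regime the reward is in fact exactly constant on a whole region of $(\alpha,\gamma)$ values for which the sets equal the oracle sets, so the argmax is a set rather than a point and convergence to $(0,0,0,0)$ only holds under a selection rule such as the infimum convention you invoke. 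Making that convention explicit is a real improvement over the paper's argument.

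One correction to your own reasoning: strict monotonicity of $h$ does \emph{not} force $\widehat{t}^{(N)},\widehat{q}^{(N)}\to 0$. Because of the floor in \cref{eq.cost}, achieving $C=C^\star$ only requires $\lfloor \widehat{q}\,|\G|\rfloor=0$ and $\lfloor \widehat{t}(\ntotal-1)\rfloor=0$, i.e.\ $\widehat{q}<1/\ntotal$ and $\widehat{t}<1/(\ntotal-1)$, and on these intervals $G_{\alpha,t,\gamma,q}=G_{\alpha,0,\gamma,0}$ so the reward is exactly flat there too. The same tie-breaking convention you use for $(\alpha,\gamma)$ is therefore needed for $(t,q)$ as well; it is not optional. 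Two smaller points: the phrase ``slower than $\{L(L-1)\}^{-1}$'' does not parse as a rate condition --- what you need is a diagonal choice $\alpha_N\downarrow 0$ with $\alpha_N$ dominating the (rate-free) consistency sequence $\delta_N$ so that the oracle sets are still captured, and a sentence acknowledging that such a diagonal sequence exists would close that gap; and your claim that the oracle sets are recovered also needs the easy converse direction, namely $\PP^{(N)}(E_{\pidx',\pidx})\to 0$ for $\xi^\star_{\pidx'}\prec\xi^\star_\pidx$ so that no wrong-direction comparison enters $\overline{\A}_{\pidx,\alpha}$, which you implicitly use when asserting $|\overline{\A}_{\pidx,\alpha}\cup\underline{\A}_{\pidx,\alpha}|=\ntotal-1$.
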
  

\begin{proof}[Proof of \cref{thm.asymptotic}]
We can find $\epsilon>0$ such that $\xi_1 \prec \cdots \prec \xi_{\ntotal}$ for all $\bxi \in B_\epsilon(\bxi^\star)$, for example, by choosing $\epsilon = \min_{\pidx \in \{1, \dots, \ntotal\}} |\xi_{\pidx+1} - \xi_\pidx|/3$ if $\xi_\pidx \in \RR$ for all $\pidx$. By \cref{eq.consistency}, for any $\delta>0$, we can choose $N$ large enough such that 
\begin{align} \label{eq.proof_1}
\PP^{(N)} \left \{ B_\epsilon(\bxi^\star) \right \} 
& > 
1-\delta,
\end{align}
which implies $\PP^{(N)}(\xi_1 \prec \cdots \prec \xi_{\ntotal}) > 1-\delta$. 
Choosing 
\begin{align} \label{eq.proof_2}
\delta 
& < 
\min \{\alpha, \gamma\} 
\end{align}
implies that $\PP^{(N)}(E_{\pidx,\pidx'}) > 1-\delta$ for $\pidx > \pidx'$ and $\PP^{(N)}(E_{\pidx',\pidx}) > 1-\delta$ for $\pidx < \pidx'$. 
Since $\delta < \alpha$, from \cref{eq.Adef2} we have that $A_{\pidx,\alpha,t} \supseteq B_\epsilon(\bxi^\star)$ for all  $\pidx \geq 0$, and thus $\PP^{(N)}(A_{\pidx,\alpha,t}) \geq \PP^{(N)}\{B_\epsilon(\bxi^\star)\} \geq 1-\delta$ for all  $\pidx \geq 0$.  
Since $\delta < \gamma$, this implies that $\G_{\alpha,t, \gamma} = \{1, \dots, \ntotal\}$ for any $\gamma >0$.
This in turn implies that $G_{\alpha, t, \gamma, q} \supseteq B_\epsilon(\bxi^\star)$ for all $q \geq 0$. 
Thus the reward function \eqref{eq.reward_fn} is maximized for $t=q=0$. 
Since the reward function \eqref{eq.reward_fn} is a decreasing function of $\alpha$ and $\gamma$, and since $\PP(G_{\alpha, t, \gamma, q}) \geq 1-\delta$, the reward function is maximized for $a = (\widehat{\alpha}^{(\infty)}, \widehat{t}^{(\infty)}, \widehat{\gamma}^{(\infty)}, \widehat{q}^{(\infty)})$. The result now follows from equations \eqref{eq.proof_2} and \eqref{eq.proof_1}.
\end{proof}

\section{Simulation study}\label{sec:simulation_study}

We perform a simulation study to investigate the global statements defined in \cref{sec:contributions}.  The simulation study is based on a subset of the NBA lineup data so that synthetic datasets mimic the inherent complexities of the NBA lineup data.
\cref{sec:data.model} introduces the approach adopted to model the data (which is used later in \cref{sec.application}) and generate synthetic datasets. \cref{sec:simulationsetting} describes simulation of the synthetic datasets and factors considered in the simulation study, and \cref{sec:simulation.resuls} presents the obtained results.  

\subsection{Statistical model} \label{sec:data.model}

The model used to estimate player abilities based on the NBA data is detailed; this will be used in the sequel to illustrate our approach. We emphasize that the framework of  \cref{sec:contributions} is applicable regardless of the model posed and the dataset considered.

Let $z_i$ and $z_i'$ denote the points scored by the two lineups that competed in the $i$th encounter, where $i = 1, \ldots, N$, and $N$ denotes the number of encounters. Each lineup is comprised of five players and, as before, we assume that $\bxi = (\xi_1, \dots, \xi_{\ntotal}) \in \RR^{\ntotal}$ denotes the abilities of the $\ntotal$ total players.
Let $\L_i \subset \{1, \dots, \ntotal\}$ and $\L'_i \subset \{1, \dots, \ntotal\}$ such that $\L_i \cap \L'_i = \emptyset$ identify the five players of the two lineups in the $i$th encounter.  The ability of lineup $\L_i$ is defined as
\begin{align} \label{eq.lineup_ability}
\tau_{\L_i}
& = 
\sum_{k \in \L_i} \xi_{k}.
\end{align}
By defining the ability of lineup $\L_i$ as the sum of the abilities of individuals in the lineup, we induce correlation among lineups that share players and achieve a more parsimonious setup. Following \cite{huang2008ranking}, we model the difference in points scored as 
\begin{align} \label{eq.gaussian_model}
(z_i - z_i') \mid \bxi, \sigma^2 
& \ind
{\rm Normal}(
\tau_{\L_i} - \tau_{\L'_i}, \sigma^2
), ~~ i = 1, \ldots, N.
\end{align}
We estimate $\bxi$ and $\sigma^2$ under a Bayesian approach, and choices for the prior distribution of $\bxi$ and $\sigma^2$ are discussed in \cref{sec:simulationsetting}.  Notice that \eqref{eq.lineup_ability} and \eqref{eq.gaussian_model} correspond to an extension of the Bradley-Terry model \citep{bradley1952rank} for continuous scores and that includes group information (i.e., lineups). If our goal were to create ordering statements for lineups only, a potential modeling strategy would be to have lineup-specific parameters. However, this would result in a less parsimonious model, which would introducing more uncertainty, and prediction for unobserved lineups would not be available.

We note that the model given by equations \eqref{eq.lineup_ability} and \eqref{eq.gaussian_model} is not identifiable. Different conditions can be imposed for identifiability, such as requiring $\ntotal^{-1} \sum_{\pidx=1}^{\ntotal} \xi_\pidx = 0$, which is interpreted as an average player having ability zero, or requiring $\xi_{\widetilde{\pidx}}=0$ for a fixed $\widetilde{\pidx} \in \{1, \dots, \ntotal\}$, in which case all other abilities are interpreted as relative to the ability of the $\widetilde{\pidx}$th player. For this paper, we impose the latter constraint with $\widetilde{\pidx}$ being the player with the highest frequency of appearance. 

We acknowledge that the structure of the NBA data is particularly complex and model \eqref{eq.gaussian_model} does not account for some of the structure present in the data. For example, point differences are discrete random variables, player abilities are expected to be correlated with playing times, some players might have similar abilities (i.e., the modeling strategy could account for ties), information beyond lineup composition (e.g., homecourt) could influence point differences, and the definition of lineup abilities should include interaction terms between players.  Nonetheless, this simple model is expected to be able to capture important orderings among players' and lineups' abilities. 

\subsection{Simulation setting} \label{sec:simulationsetting}

We conduct a simulation study to examine the performance of the proposed method under four different conditions: (i) the prior distributions for $\bxi$ and $\sigma^2$, (ii) the sample size (which is given by the number of encounters), (iii) the true abilities to be estimated, and (iv) the number of parameters to be ordered (which is given by the number of players). In what follows, we provide specific details about how synthetic datasets are generated; as a high level description, our approach is to first fit model \eqref{eq.gaussian_model} to a subset of the NBA data and then employ the posterior distribution of $\bxi$ and $\sigma^2$ as the seed to generate true player abilities.

To complete the model specification described in the previous section, we consider the following priors for $\bxi$ and $\sigma^2$.

\begin{tabular}{ll}
Prior 1:  &  $\xi_\pidx \mid \mu \iid \lap(\mu,3), \quad \pidx = 1, \dots, \ntotal,$ \\
Prior 2:  &  $\xi_\pidx \mid \mu \iid \lap ( \mu, \sigma \lambda^{-1/2}), \quad \pidx = 1, \dots, \ntotal, \quad
\lambda  \sim \mathrm{Gamma}(1,1),$\\
Prior 3:  & $\xi_\pidx \mid \mu \iid \Nor(\mu,\sigma^2\lambda), \quad \pidx = 1, \dots, \ntotal, \quad \lambda \sim \mathrm{Half}\mbox{-}\mathrm{Cauchy}(0,1)$,
\end{tabular}

\noindent with $\pi (\sigma^2)  \propto \sigma^{-2}$ and  $\mu \sim \Nor(0,3^2)$. Prior 1 represents an informative scenario that assumes knowledge of the approximate variability of $\xi_\pidx$s. Priors 2 and 3 are less informative and induce shrinkage towards the mean $\mu$, with prior 3 producing more shrinkage than prior 2.  Prior 2 is motivated by \cite{park2008bayesian} and \cite{hans2009bayesian,hans2010model}, and prior 3  by \cite{gelman2006prior} and \cite{polson2012bayesian}.  The prior variance for $\mu$ was set to $9$ due to the fact that most encounters between lineups are competitive and so few result in point differences of double digits.

Since the idea is to simulate data that mimic the complexities of the NBA lineup data, we consider two subsets of data with five and ten randomly selected teams, respectively. The subset with five teams has 78 players and 844 encounters, while the one with ten teams has 167 players and 3604 encounters. For both subsets, we sample from the posterior distribution of the abilities of players based on the three priors described above. We use Stan \citep{stan} to collect $10^4$ samples from each posterior distribution, from which we discard the initial 20\% as burn-in and thin every 8 after that.  We use the no-U-turn sampler (NUTS) \citep{hoffman2014no}, which is an adaptive version of Hamiltonian Monte Carlo \citep{duane1987hybrid}. From the $10^4$ samples, we randomly select fifty samples denoted by $(\tilde\bxi_{(j,k,d)}, \tilde\sigma^2_{(j,k,d)})$, with $j=1, \dots, 50$, indexing the samples, $k=1,2,3$ indexing the prior used for data generation, and $d=5,10$ being the number of teams. The use of priors 1, 2, and 3 for data generation offers different levels of similarity among true abilities, with prior 1 and 3 leading to the least and most similar abilities, respectively.

We generate synthetic datasets as follows. For each value of $(j,k,d)$, we assume that $\tilde\bxi_{(j,k,d)}$ and $\tilde\sigma^2_{(j,k,d)}$  are the true abilities and variance, respectively. Given the shrinkage properties of the considered priors, the abilities in $\tilde\bxi_{(j,3,d)}$ are more similar to each other than those in $\tilde\bxi_{(j,2,d)}$, and $\tilde\bxi_{(j,1,d)}$ represents the case where abilities are the least similar. Using model \eqref{eq.gaussian_model}, we simulate differences in points scored assuming that the lineups in each encounter are the same as observed in the NBA data. The synthetic datasets thus are comprised of 844 (for $d=5$) and 3604 (for $d=10$) encounters. Moreover, for each $(j,k,d)$, we consider two sets of experiments indexed by $s = 1, 2$. In the first set ($s=1$), we consider the encounters as they are, while in the second set ($s=2$), we generate two independent realizations of the differences in points scored $z_i$ for each encounter. Thus for the second set of experiments, the synthetic datasets are comprised of 1688 (for $d=5$) and 7208 (for $d=10$) encounters. In total, we are generating 600 ($50 \times 3 \times 2 \times 2$) synthetic datasets, where each of them are associated with a particular combination of $(j,k,d,s)$. We use $\mathrm{D}^{(j,k,d,s)}$ to denote the synthetic dataset corresponding to combination $(j,k,d,s)$.

For each synthetic dataset we sample from the posterior distributions of the abilities of players and lineups using the three priors described above. We use the algorithm in \cref{app:opt} and restrict the search to statements with posterior probability greater than $0.9$ to obtain the optimal global statements.  
Let $G_{\widehat{a}}^{(j,k,d,s,m)}$ denote the optimal global statement -- note that $\widehat{a} = (\widehat{\alpha}, \widehat{t},\widehat{\gamma},\widehat{q})$ is also a function of ${(j,k,d,s,m)}$ -- obtained from model \eqref{eq.gaussian_model}, prior $m$, and dataset $\mathrm{D}^{(j,k,d,s)}$. To evaluate the performance of our procedure at each combination $\bv = (k,d,s,m)$, we compute the following measures: (i) posterior probability of the global statements; (ii) frequentist coverage, that is, fraction of global statements that are consistent with the true abilities; (iii) the number of local statements and average number of players (or lineups) within local statements for each global statement; and (iv) local and global errors. 

We compare our proposed approach with the cumulative probability consensus ranking (CPCR) methodology proposed by \cite{vitelli2018probabilistic}. Because the information provided by our statements and CPCR cannot be compared directly (our approach produces a joint statement while the CPCR produces a type of marginal statement), we define a type of joint CPCR.  This permits us to compare the two approaches which we do based on measures equivalent to (i) and (ii), i.e., posterior probability of a type of joint CPCR; (ii) fraction of joint CPCRs that are consistent with the true abilities. \cite{vitelli2018probabilistic} define CPCRs as follows: first select the item which has the maximum a posteriori marginal probability of being ranked 1st--we denote this item with $\xi_{(1)}$; then the item which has the maximum a posteriori marginal posterior probability of being ranked 1st or 2nd among the remaining ones--denoted by $\xi_{(1:2)}$, etc. Thus, we define joint CPCRs for players as $\bigcap_{l=L_0}^L [\xi_{(1:l)}\mbox{ ranked $1$st, $2$nd, $\ldots$, or $l$th}]$, $L_0 = 1, \ldots, L$, and for lineups as $\bigcap_{l=L_0}^T [\tau_{(1:l)}\mbox{ ranked $1$st, $2$nd, $\ldots$, or $l$th}]$, $L_0 = 1, \ldots, T$, where $T$ is the number of observed lineups.

\subsection{Results}

\label{sec:simulation.resuls}

We apply our approach using the reward function in \cref{eq.reward_fn} to find global statements for lineups and players. For lineups, we draw a sample from the posterior distribution of player abilities, which simultaneously leads to a sample from the posterior distribution of lineup abilities $\tau_{\L_i}$.  The optimization described in \cref{sec.decision} is not performed over all 5-tuples, but only over the observed lineups. Given the similarities between the patterns observed for players and lineups, we report the results for lineups in \cref{fig:simulation.results}, while results for players are deferred to Figure \ref{fig:simulation.resultsS1} of the online Supplementary Material. Local and global errors for statements associated with players and lineups are both reported in the online Supplementary Material (see Figure \ref{fig:simulation.resultsS2}). Moreover, statements associated with lineups are based on the lineups from the Cleveland Cavaliers and San Antonio Spurs. 

\begin{figure}
\centering
\includegraphics[scale=0.45, trim={0 0cm 4.5cm 0cm},clip=true,page=1]{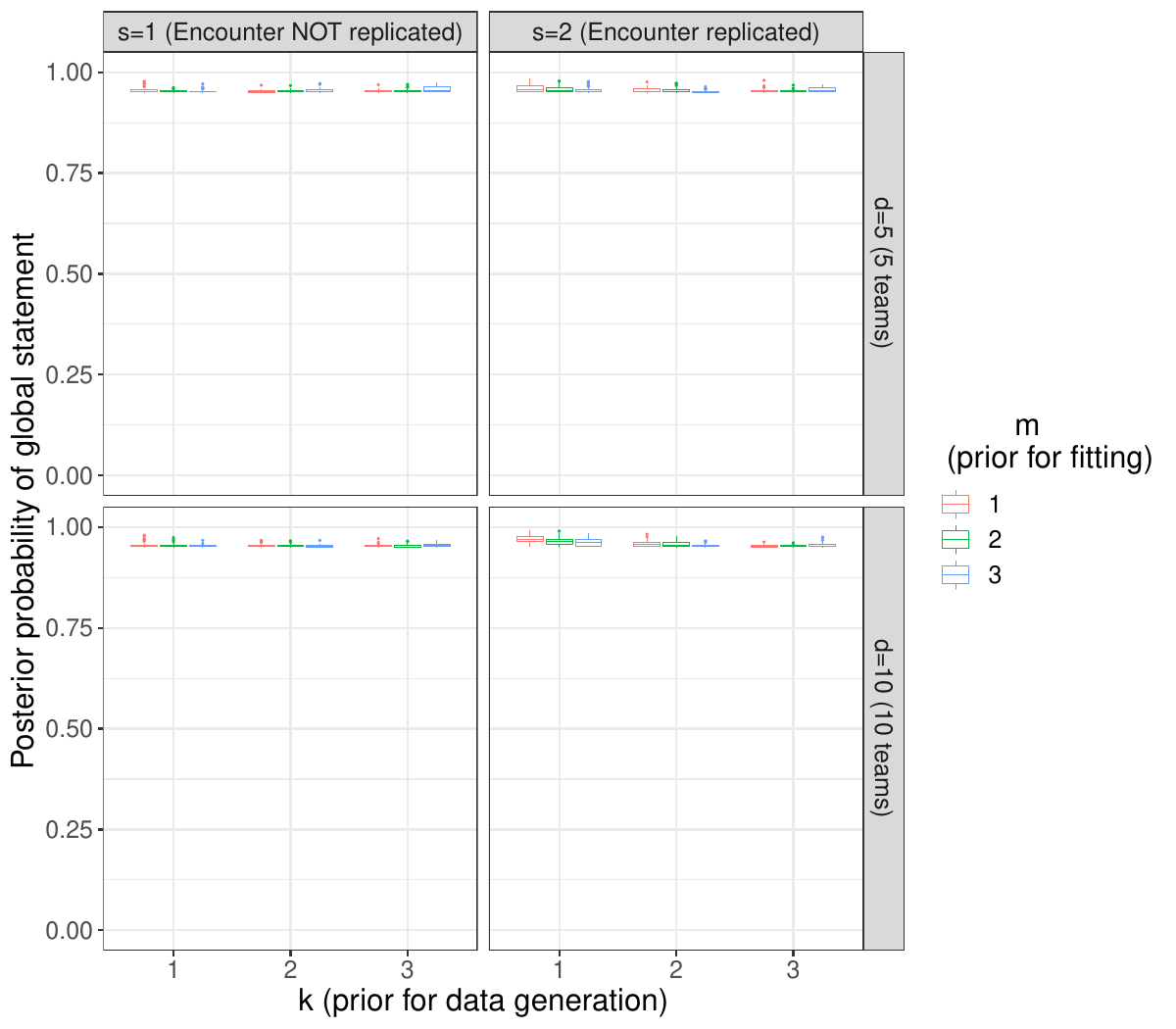}
\includegraphics[scale=0.45, trim={0 0cm 0cm 0cm},clip=true,page=2]{plots/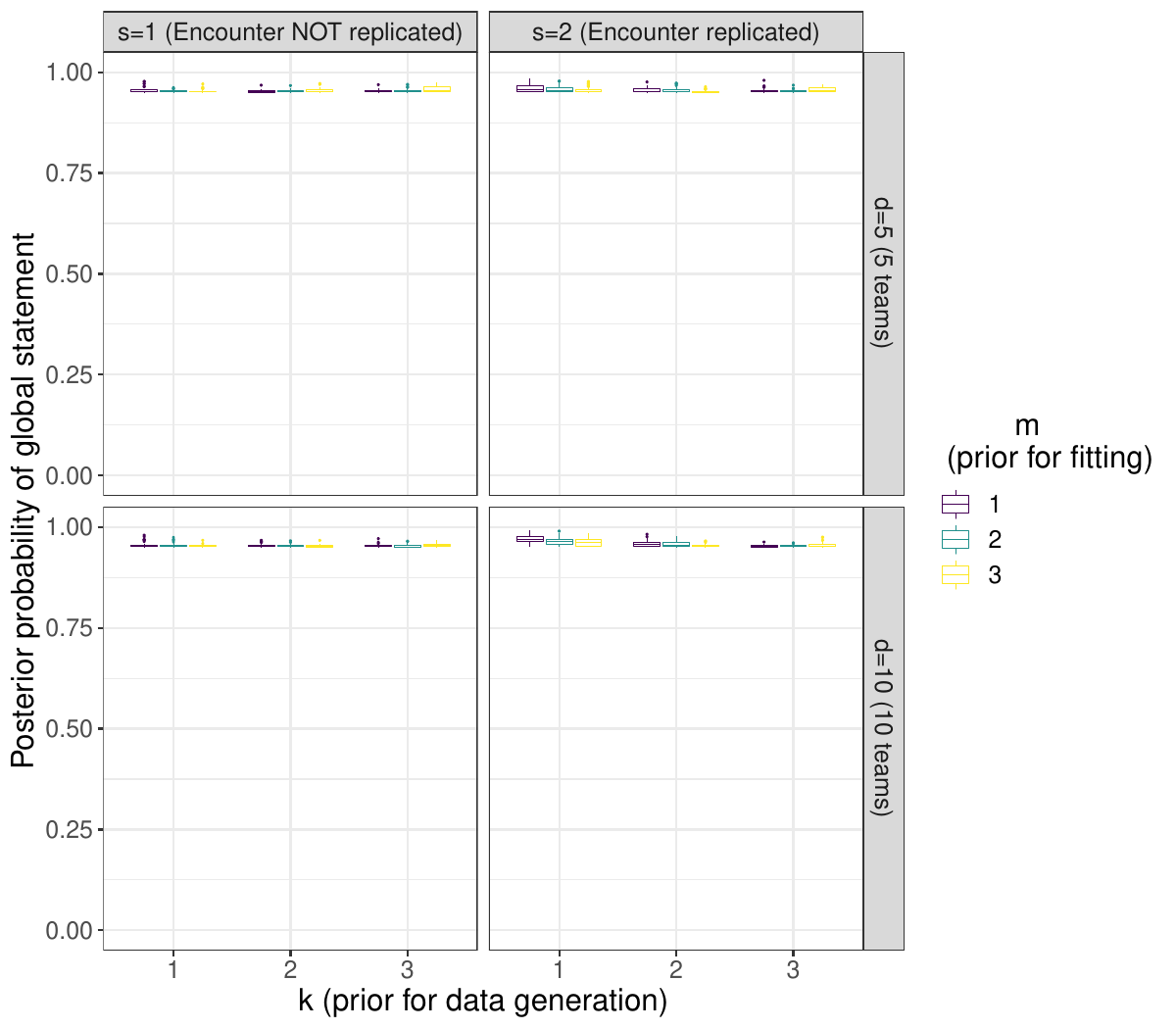}
\includegraphics[scale=0.45, trim={0 0cm 4.5cm -1cm},clip=true,page=3]{plots/FigureLineups_JRSSC.pdf}
\includegraphics[scale=0.45, trim={0 0cm 0cm -1cm},clip=true,page=4]{plots/FigureLineups_JRSSC.pdf}
\caption{Simulation results for lineups assessing the effect of prior distribution used for fitting (captured by $m$), prior distribution used for data generation (captured by $k$), number of parameters (captured by $d$), and doubling the sample size (captured by $s$) with $\bv = (k,d,s,m)$.  
The top-left plot displays the posterior probability of the global statements;
the top-right plot shows the frequentist coverage;
the bottom-left plot presents the number of local statements within global statements;
and the bottom-right plot shows the distribution of the average number of lineups within local statements across global statements. These plots summarize results over 50 synthetic datasets.}
\label{fig:simulation.results}
\end{figure}

The first observation from \cref{fig:simulation.results} is that the proposed procedure is, as expected, able to find global statements having high posterior probability (above $0.9$). The number of teams $d \in \{5,10\}$ does not appear to have a direct impact on the posterior probabilities or frequentist coverage as no clear patterns emerge. However, as expected, relative differences between true player abilities (reflected by $k$), coupled with the prior distribution specification $m$, impacts the frequentist coverage as well as the number of local statements and average number of lineups within local statements.  

Generally speaking, the average number of lineups within local statements are smallest for prior 3 (that is, $m=3$). Since making statements with a large number of lineups is desirable, this is unsatisfactory if the frequentist coverage does not remain high.  Notice, however, that as the relative similarity among player and lineup abilities increases (that is, $k$ increases), the coverage associated with priors 1 and 2 decreases quite dramatically even with sparse global statements (that is, low number of local statements and low average number of players or lineups within local statements); this too is undesirable. Thus, the balancing act alluded to earlier (correctness of statement versus the number of items the statement contains) is impacted by the prior distribution and  similarity of true parameter values. Indeed, when the true abilities are most similar ($k=3$) and the model accurately captures this ($m=3$), we observe global statements that are sparser, indicating that the method refrains from collecting many local statements when forming the global statement, which is the right decision in such cases. Finally, increasing the number of encounters (moving from $s=1$ to $s=2$) has the desired effect of producing statements that are less sparse and have higher posterior probability and smaller local and global errors (see Figure \ref{fig:simulation.resultsS2} of the Supplementary Material).

We now analyse the results obtained under the CPCR methodology. As originally intended by \cite{vitelli2018probabilistic}, the CPCR approach identifies, for each $l = 1,\ldots,T$, events of the form $[\tau_{(1:l)}\mbox{ ranked $1$st, $2$nd, $\ldots$, or $l$th}]$ that concentrates the highest posterior probability (see Figure \ref{fig:simulation.results.CPCR} in the supplementary material). Both the average posterior probability and the frequentist coverage for each CPCR are similar to each other across all combinations of simulation factors. The CPCRs representing the lineups with highest ability have probabilities less than or around $0.3$, while CPCRs representing lineups that have the lowest ability have probabilities close to one (by design CPCRs for the worst lineups are more imprecise). As mentioned, global statements and CPCR are not directly comparable because the former provides joint statements and the latter provides a type of marginal statement. For this reason, we consider joint CPCRs defined as $\bigcap_{l=L_0}^T [\tau_{(1:l)}\mbox{ ranked $1$st, $2$nd, $\ldots$, or $l$th}]$, $L_0 = 1, \ldots, T$. Figure \ref{fig:simulation.results.lineups.joint.CPCR} shows the effect that $L_0$ has on the probabilities for joint CPCRs. The larger $L_0$ (less lineups in the joint statement), the higher the probability. Notice how quickly these probabilities decay to zero. In fact, if the goal is to report a joint statement that concentrates a probability close to $0.9$, we must select $L_0 = 243$ for $d = 5$ ($5$ teams = $247$ lineups) and  $L_0 = 422$ for $d = 10$ ($10$ teams = $425$ lineups). For the case $L_0 = 243$ with $d = 5$, the most precise statement in $\bigcap_{l=243}^{247} [\tau_{(1:l)}\mbox{ ranked $1$, $2$, $\ldots$, or $l$}]$ states that lineup $\tau_{(1:243)}$ is ranked $1$st, $2$nd, $\ldots$, or $243$th out of $247$ lineups, which does not provide much information. Notice that under the CPCR methodology the most informative statements are those for the best lineups suggesting we could alternatively consider joint statements of the form $\bigcap_{l=1}^{L_0} [\tau_{(1:l)}\mbox{ ranked $1$, $2$, $\ldots$, or $l$}]$, $L_0 = 1, \ldots, T$. Unfortunately, the probability of any of those joints statements will be upper bounded by the probability of $[\tau_{(1)}\mbox{ ranked first}]$ which is, in the best case, around $0.3$ as stated before. This strategy would limit the possibility of finding joint statements that concentrate high probability. The results for players under the CPCR methodology are similar to those obtained for lineups (see Figures \ref{fig:simulation.results.CPCR} and \ref{fig:simulation.results.players.joint.CPCR} in the supplementary material).

\begin{figure}
\centering
\includegraphics[scale=0.45, trim={0 0cm 4.5cm 0cm},clip=true,page=3]{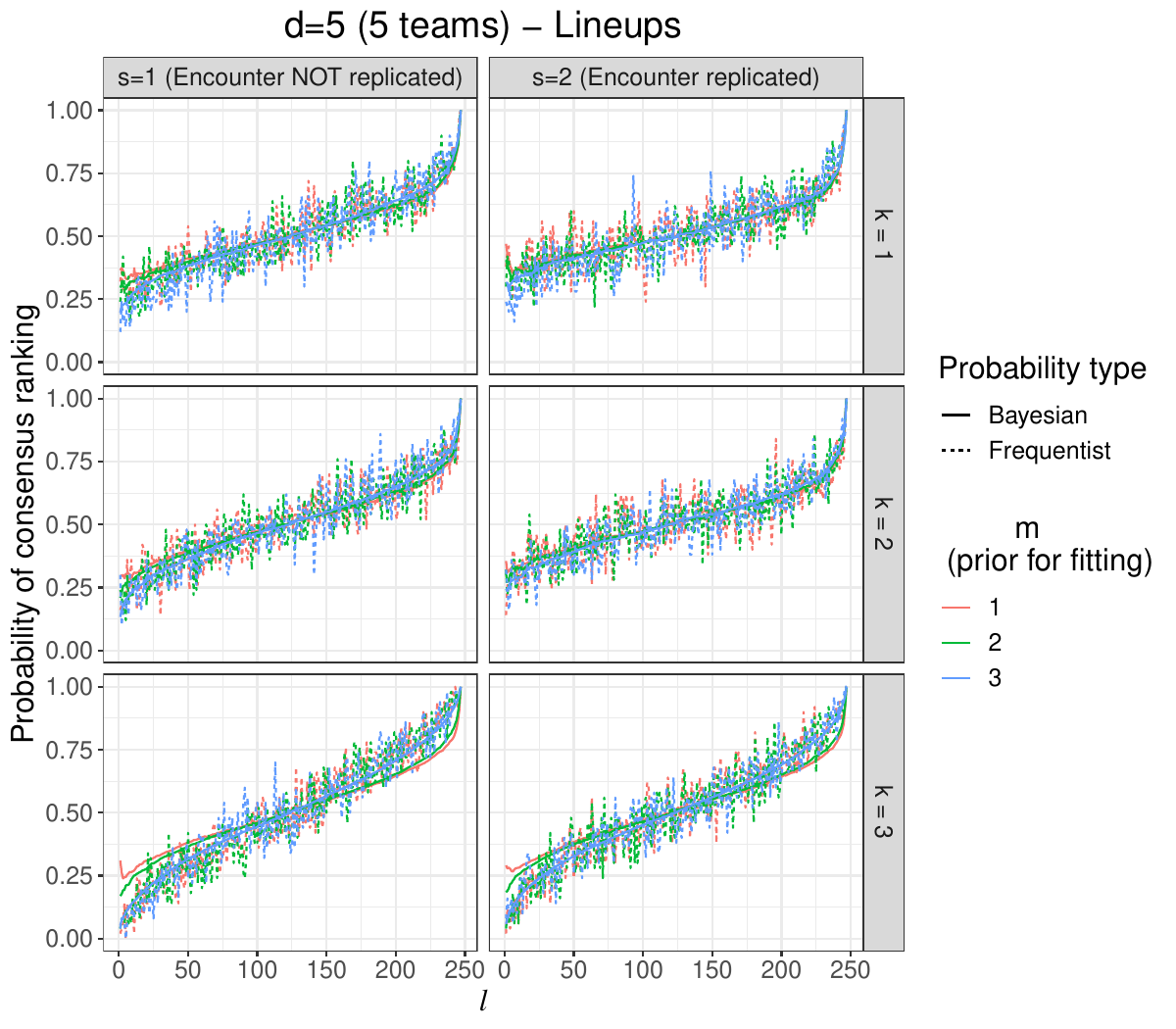}
\includegraphics[scale=0.45, trim={0 0cm 0cm 0cm},clip=true,page=4]{plots/FigureCPCR_Lineups_BA.pdf}
\caption{Simulation results for joint CPCRs for lineups assessing the effect of prior distribution used for fitting (captured by $m$), prior distribution used for data generation (captured by $k$), number of parameters (captured by $d$), and doubling the sample size (captured by $s$) with $\bv = (k,d,s,m)$.  Each plot displays the average posterior probability and frequentist coverage of the joint CPCRs (i.e., $\bigcap_{l=L_0}^T [\tau_{(1:l)}\mbox{ ranked $1$, $2$, $\ldots$, or $l$}]$). These plots summarize results over 50 synthetic datasets.}
\label{fig:simulation.results.lineups.joint.CPCR}
\end{figure}

The main conclusion from the simulation study is that the proposed method  succeeds in finding statements having high posterior probability. The accuracy of the statements at containing the truth relies on the accuracy of the posterior distributions, and we observe that employing prior 3 produces global and local statements that maintain good frequentist properties at the cost of producing conservative statements.  On the other hand, priors 1 and 2 produce statements that are less conservative, but are also more inaccurate.  Thus if one does not have prior information regarding the similarity of true parameters, the recommended option is to use priors that lead to more conservative statements. If one indeed has prior information that the true parameters are relatively dissimilar, more informative priors (like prior 1) should be used as they lead to less sparse statements.  We briefly mention here that the same conclusions where drawn when generating data that contained lineups that were not observed in the NBA data.  Hypothetical lineups are clearly of interest to general managers and coaches as they work through the process of team building. Details are provided in Section S1 of the supplementary material.

To end this section, we provide the computational times required to optimize the global statements in \cref{fig:comp_times.results}. 
While parts of the algorithm for finding the optimal local and global statements can be parallelised to speed things up, we do not incorporate this into our code. Even without doing this, it takes less than two minutes to approximate the optimal global statement. 

\begin{figure}
\centering
\includegraphics[scale=0.45, trim={0 0cm 4.5cm 0cm},clip=true,page=7]{plots/FigureLineups_JRSSC.pdf}
\includegraphics[scale=0.45, trim={0.7cm 0cm 0cm 0cm},clip=true,page=7]{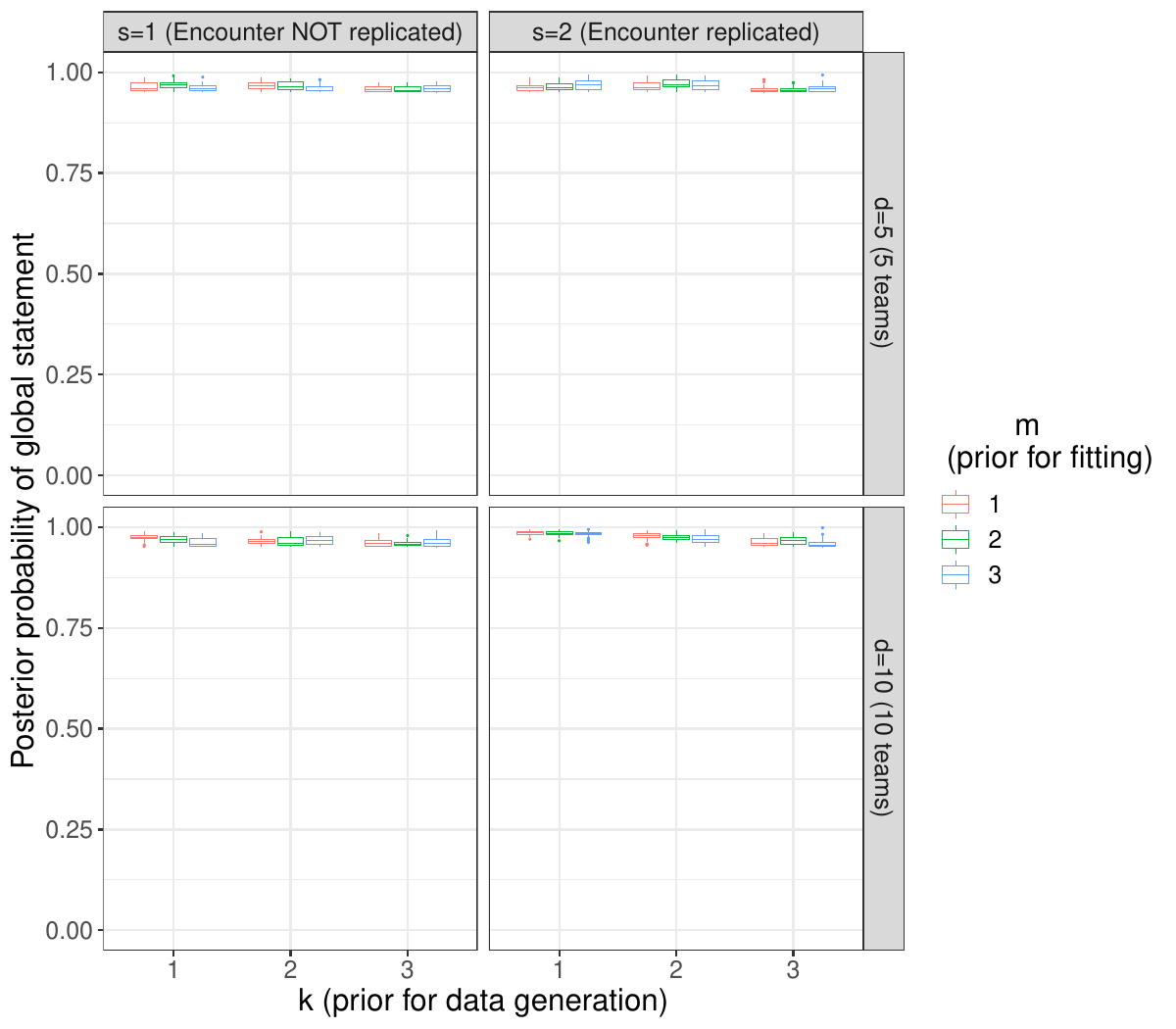}
\caption{Wall clock times required to approximate optimal global statements without parallelisation.}
\label{fig:comp_times.results}
\end{figure}

\section{Application: National Basketball Association lineup data} \label{sec.application}

The 2009-2010 NBA season consisted of $N$ = 30,807 total encounters with 10,942 unique lineups based on 30 teams and 510 players. For each encounter, the points scored by each lineup is recorded along with the five players that constituted each lineup.  For simplicity,  we treat players that change teams during the season as different players; such changes mid-season affect a small number of players. We fit the model described in \cref{sec:data.model} using prior 3. This prior was used as the simulation study indicated that it should be used in the absence of knowledge regarding the similarity of true parameters. Similarly to \cref{sec:simulationsetting}, posterior sampling was carried out using NUTS implemented in Stan.  We generated 13 independent Markov chain Monte Carlo (MCMC) chains.  The initial 2,000 iterates for each chain was discarded and each chain thinned by 20 until a total of 10,000 combined posterior draws were collected.  Then our method was applied to player and lineup abilities based on the posterior distributions of the player abilities.

Instead of optimizing over the set of all possible global statements, in applications one may want to impose additional constraints in order to obtain global statements with desired characteristics. Our framework allows this by simply letting us fix certain parameters while optimizing over the others. Setting the parameter $t$ to be very small leads to many narrow local statements (that is, lineup/player statements that contain few competing lineups/players). This can be seen from the form of the function \eqref{eq.cost}. A similar effect occurs at the global level if $q$ is very small. To illustrate the impact that $t$ and $q$ have on the resulting local and global statements,  we  consider $t = q  \in \{0, 0.1, 0.25\}$.   Unless otherwise stated, we use a grid  of 21 equally spaced values between $0$ and $0.05$ for $\alpha$ and $\gamma$ to obtain global statements; our simulations have indicated that the results do not change by using a finer grid.

\subsection{Ordering statements for lineups} \label{sec:results_lineups}

We detail results for lineups in this section.  For ease of illustration, we only consider lineups from teams that made the playoffs during the season under consideration. This subset of the data constitutes 16 teams, 261 players, 5,121 lineups, and 15,884 encounters.   

\begin{figure}
\centering
\includegraphics[scale=0.0825, trim={0 0cm 0cm 0cm},clip=true,page=3]{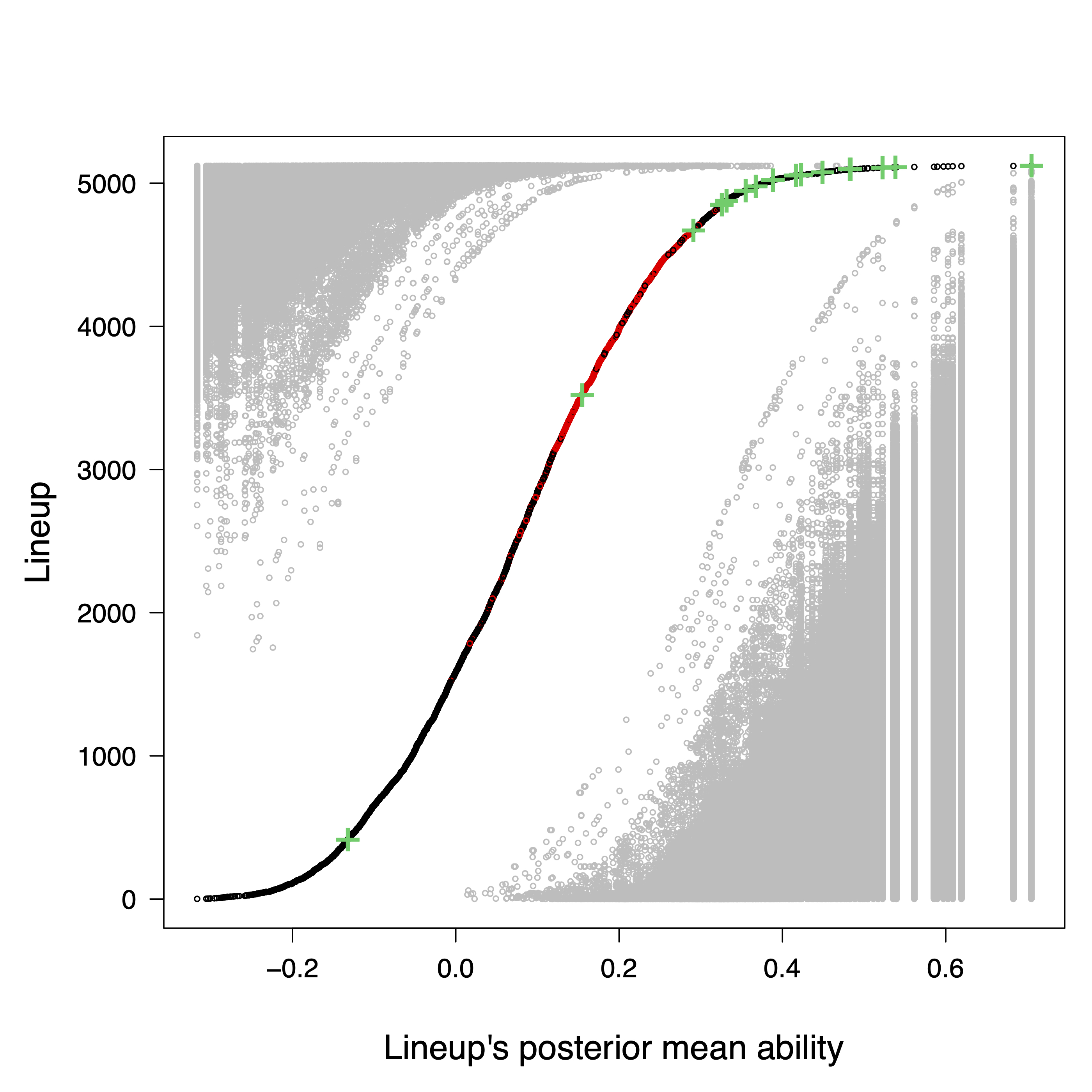}
\includegraphics[scale=0.0825, trim={0.7cm 0cm 0cm 0cm},clip=true,page=3]{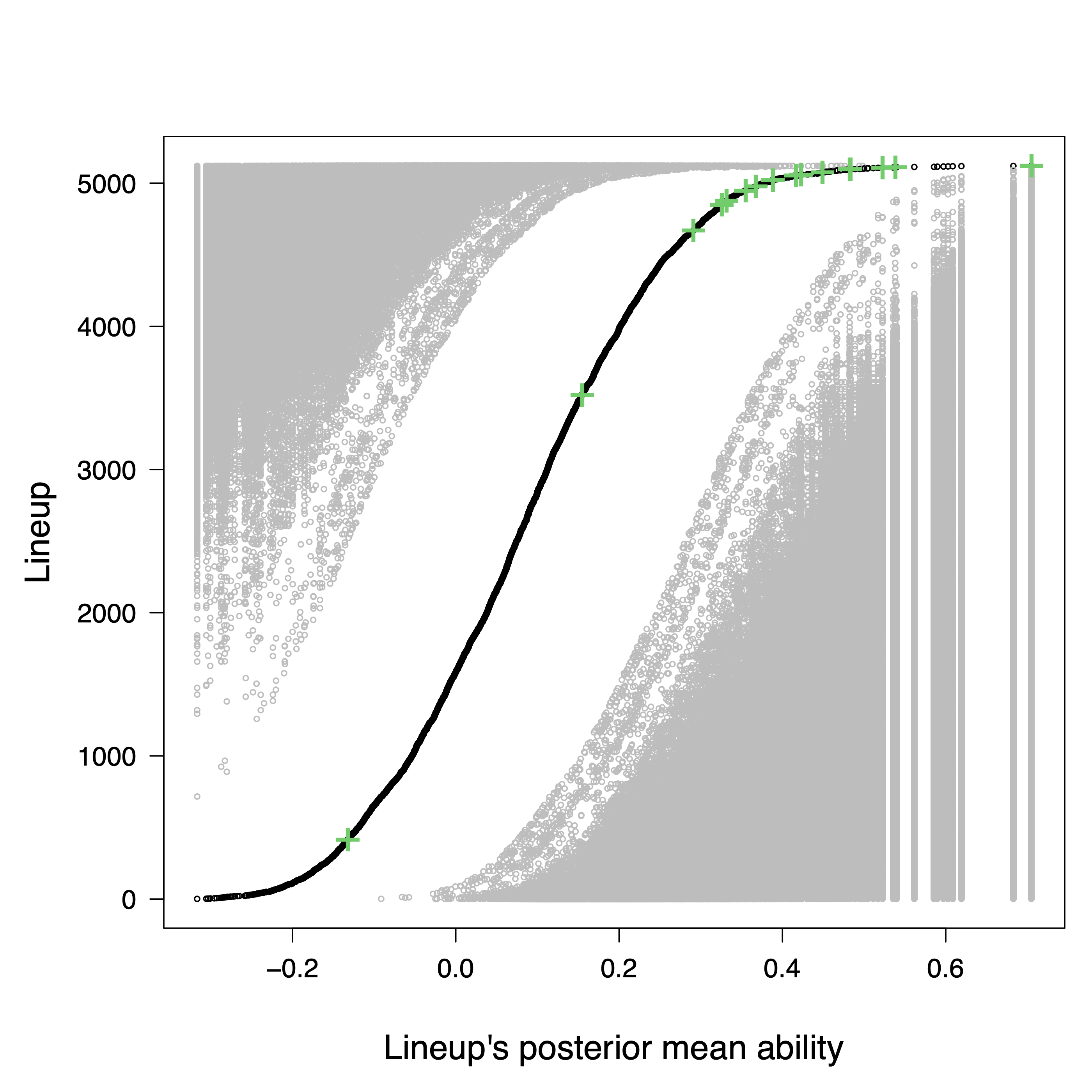}
\caption{Visual depiction of global statement based on the 5,121 lineups from the 16 playoff teams. Left and right panels display global statements setting  $(\alpha,t, \gamma, q)$ equal to $(0.025,0.1,0.05,0.1)$ and $(0.05,0.25,0.05,0.25)$, respectively. Each graph should be interpreted vertically, where each column of dots represents a particular lineup and its corresponding local statement. Red dots represent lineups to which no comparisons with any other lineups can be made. Gray dots below the black one define lineups that are worse than the corresponding lineups and those above define the group of lineups that are better. Green ``+" points correspond to the lineups listed in Table \ref{lineup_results_2}.
}
\label{fig:all_playoff_teams}
\end{figure}

\cref{fig:all_playoff_teams} displays the global statement for the 5,121 lineups based on $t=q=0.1$ and $t=q=0.25$. It took about eighteen minutes to compute this.  The posterior probability of the global statement using $t=q=0.1$ is $0.95$ and using $t=q=0.25$ is $1.0$. As expected, each lineup's local statement for $t=q=0.1$ is sparser in the sense that they contain less competing lineups compared to lineup local statements for $t=q=0.25$.  To illustrate this the red points in the left plot of \cref{fig:all_playoff_teams} correspond to lineups whose local statements are empty.  That is, these lineups are no better or worse than any other lineup.  On the other hand, all lineup local statements based on $t=0.25$ are non-empty.  This added ``information'' comes at the cost of accuracy as more errors are permitted for each lineup.  We therefore recommend exploring global statements for a range of values for $t$ and $q$, and making decisions depending on the amount of risk that a decision maker is willing to take. 

The lineup that was better than the greatest number of other lineups is from the Orlando Magic and is comprised of Matt Barnes, Vince Carter, Dwight Howard, Rashard Lewis, and Jameer Nelson.  This lineup was better than 4,803 of the  5,121 lineups and worse than none in their local statement for $t=q=0.1$, and they were better than 5,019 lineups and worse than none in their local statement for $t=q=0.25$. Table  \ref{lineup_results_2} contains the ``best'' lineups (that is, those whose local statement contains the highest number of competing lineups for which they are better) extracted from the global statement for each of the playoff teams under both $t=q=0.25$ and $t=q=0.1$. Interestingly, moving from $t=0.1$ to $t=0.25$ results in a larger increase of lineups ``below'' than lineups ``above'' for each team.  Each of the lineups could be considered to be their teams ``best''.  To visualize where each lineup in Table \ref{lineup_results_2} is located in the global statement, they are highlighted by green colored ``+'' points in \cref{fig:all_playoff_teams}.  It should be clear why, for example, Chicago's lineup is not better than any other lineup, as it is found in the lower tail of lineup abilities.  It may be tempting at first glance to use the information like that provided in Table \ref{lineup_results_2} to rank order lineups.  Doing so however, is problematic for two reasons.  First, it is not clear how to rank lineups that have empty local statements (i.e., those that are no better or worse then any other lineup).  Second, it is possible that there exist two lineups whose local statements are the same size but are comprised of different lineups.  Here too, it is unclear how to rank the two lineups.

\begin{table}[ht]
\caption{Results when comparing lineups from all playoff teams. 
Each team's listed lineup corresponds to that which was better than the highest number of lineups among all playoff teams.  The columns  ``below'' and ``above'' refer to the number of lineups worse and better, respectively, than the stated lineup across all playoff teams.}
\centering
\begin{tabular}{cl rr rr}
\toprule
& & \multicolumn{2}{c}{$t=q=0.1$} & \multicolumn{2}{c}{$t=q=0.25$} \\\cmidrule(lr){3-4} \cmidrule(lr){5-6}
Team & Lineup & Below & Above & Below & Above \\ 
\midrule
ATL & Bibby, Horford, Johnson, Josh Smith, Williams &      2241 &   0 & 3230 &   0 \\
BOS & Garnett, Perkins, Pierce, R.Allen, Rondo &           1696 &   0 & 2624 &   1 \\ 
CHA & Augustin, Diaw, Jackson, Murray, Wallace &           3 &   9 &  51 &  26 \\ 
CHI & Alexander, D.Brown, Law, Pargo, Richard &            0 &  89 &   0 & 332 \\ 
CLE & James, Jamison, Parker, Varejao, West &              2505 &   0 & 3473 &   0 \\ 
DAL & Dampier, Kidd, Marion, Nowitzki, Terry &             1428 &   0 & 2347 &   0 \\
DEN & Andersen, Anthony, Billups, Nene, Smith &            555 &   1 & 1205 &   2 \\ 
LAL & Artest, Bryant, Bynum, Fisher, Gasol &               2940 &   0 & 3818 &   0 \\ 
MIA & Alston, Haslem, ONeal, Richardson, Wade &            639 &   0 & 1370 &   1 \\ 
MIL & Bogut, Delfino, Ilyasova, Jennings, Salmons &        168 &   1 & 645 &   2 \\ 
OKC & Collison, Durant, Krstic, Sefolosha, Westbrook &     890 &   0 & 1723 &   1 \\ 
ORL & Barnes, Carter, Howard, Lewis, Nelson &              4803 &   0 & 5019 &   0 \\ 
PHX & Dudley, Frye, Nash, Richardson, Stoudemire &         948 &   1 & 1774 &   2 \\ 
POR & Aldridge, Batum, Camby, Miller, Roy &                1410 &   0 & 2418 &   0 \\ 
SAS & Duncan, Ginobili, Jefferson, McDyess, Parker &       2168 &   0 & 3184 &   0 \\ 
UTA & Boozer, Kirilenko, Korver, Millsap, Williams &       452 &   1 & 1094 &   2 \\ 
\bottomrule
\end{tabular}
\label{lineup_results_2}
\end{table}

It is somewhat challenging to summarize all the information found in \cref{fig:all_playoff_teams}. From a coaching perspective, focusing specifically on their team's lineups would be more useful as this would allow coaches to more easily identify their team's best lineups and game plan for a particular opponent.  To illustrate this, we extract the 305 local lineup statements from the global statement for the Utah Jazz and display them in \cref{fig:utah_jazz}. We note that all local statements are guaranteed to have posterior probability at least that of the global statement. As before, setting $t=q=0.1$ results in local statements that are sparser relative to those for $t=q=0.25$, and the red points found in the left plot of \cref{fig:utah_jazz} correspond to lineups whose local statements are empty.

Unsurprisingly, for both $t=q=0.1$ and $t=q=0.25$, Deron Williams was a member of all the ``best'' Utah Jazz lineups.  Perhaps somewhat surprisingly, Kyle Korver rather than Carlos Boozer (the supposed 2nd best player on the Jazz) was featured more regularly in the ``better'' Jazz lineups.  Very suprisingly, Kyrylo Fesenko (a seldom used player) was a regular member of the Jazz's ``better'' lineups.  In fact the lineup with the largest estimated ability is comprised of Carlos Boozer, Kyrylo Fesenko, Andrei Kirilenko, Kyle Korver, and Deron Williams. This lineup played sparingly (a total of five encounters), but has a local statement that is not empty (better than 305 other lineups most of which didn't contain Deron Williams). Thus, it seems that the Utah Jazz coach should have explored lineups that contained Fesenko more.  Contrast this to the lineup of Andrei Kirilenko, Kyle Korver, Wesley Matthews, C.J. Miles and Deron Williams which played more regularly and whose estimated ability is similar (the red dot with the largest estimated ability in left plot of Figure \cref{fig:utah_jazz}).  However, the local statement associated with this lineup is empty.  The reason behind this is the uncertainty associated with the lineup ability estimates.  The posterior standard deviation of the latter is larger than the former. Remaining silent about this lineup highlights a novel feature of our approach.  It turns out that the Utah Jazz lineup with the local statement containing the highest number of competing lineups consists of Carlos Boozer, Andrei Kirilenko, Kyle Korver, Paul Millsap, and Deron Williams. This lineup was arguably the strongest for the Utah Jazz during the 2009-2010 season.

\begin{figure}
\centering
\includegraphics[scale=0.4, trim={0 0cm 0cm 0cm},clip=true]{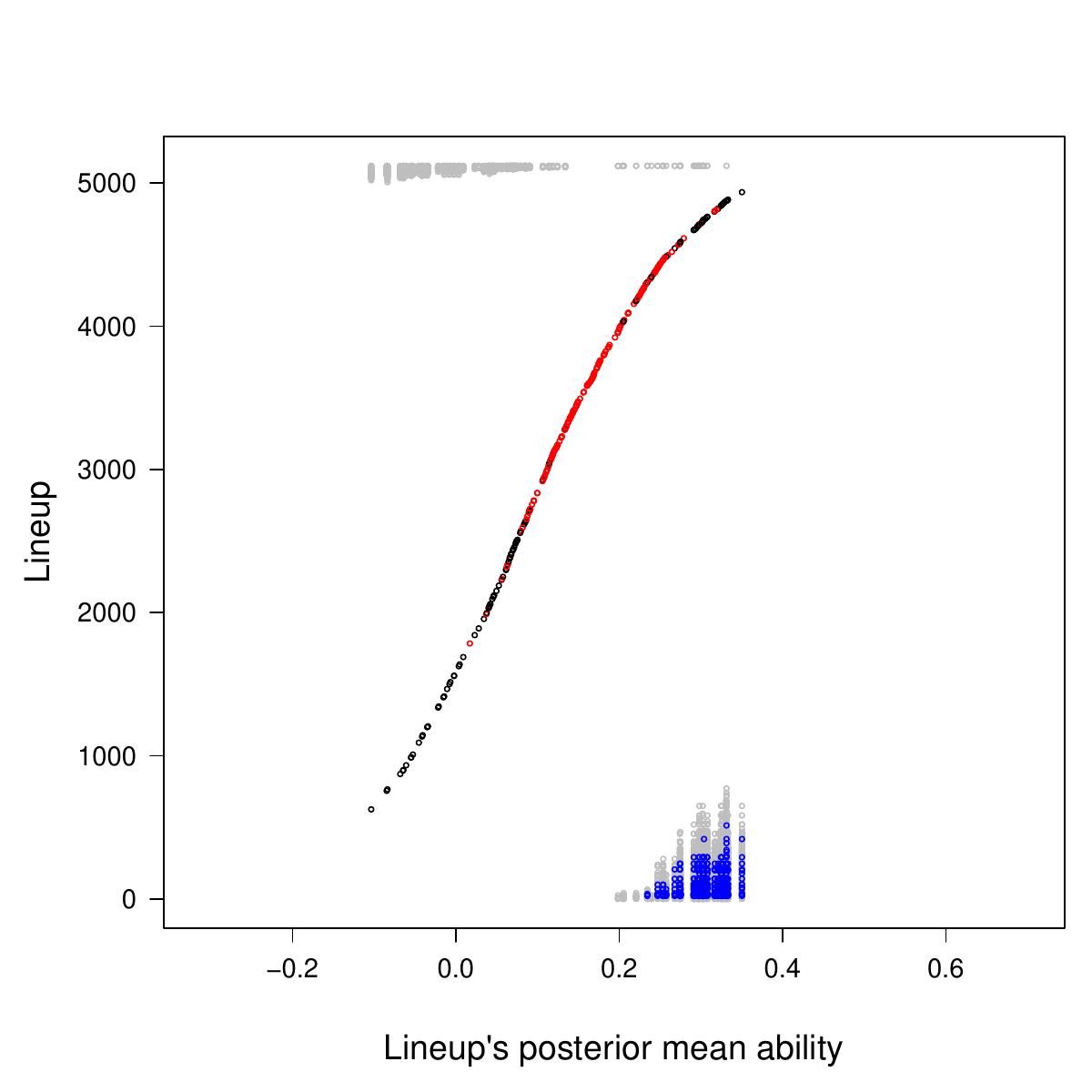}
\includegraphics[scale=0.4, trim={0.4cm 0cm 0cm 0cm},clip=true]{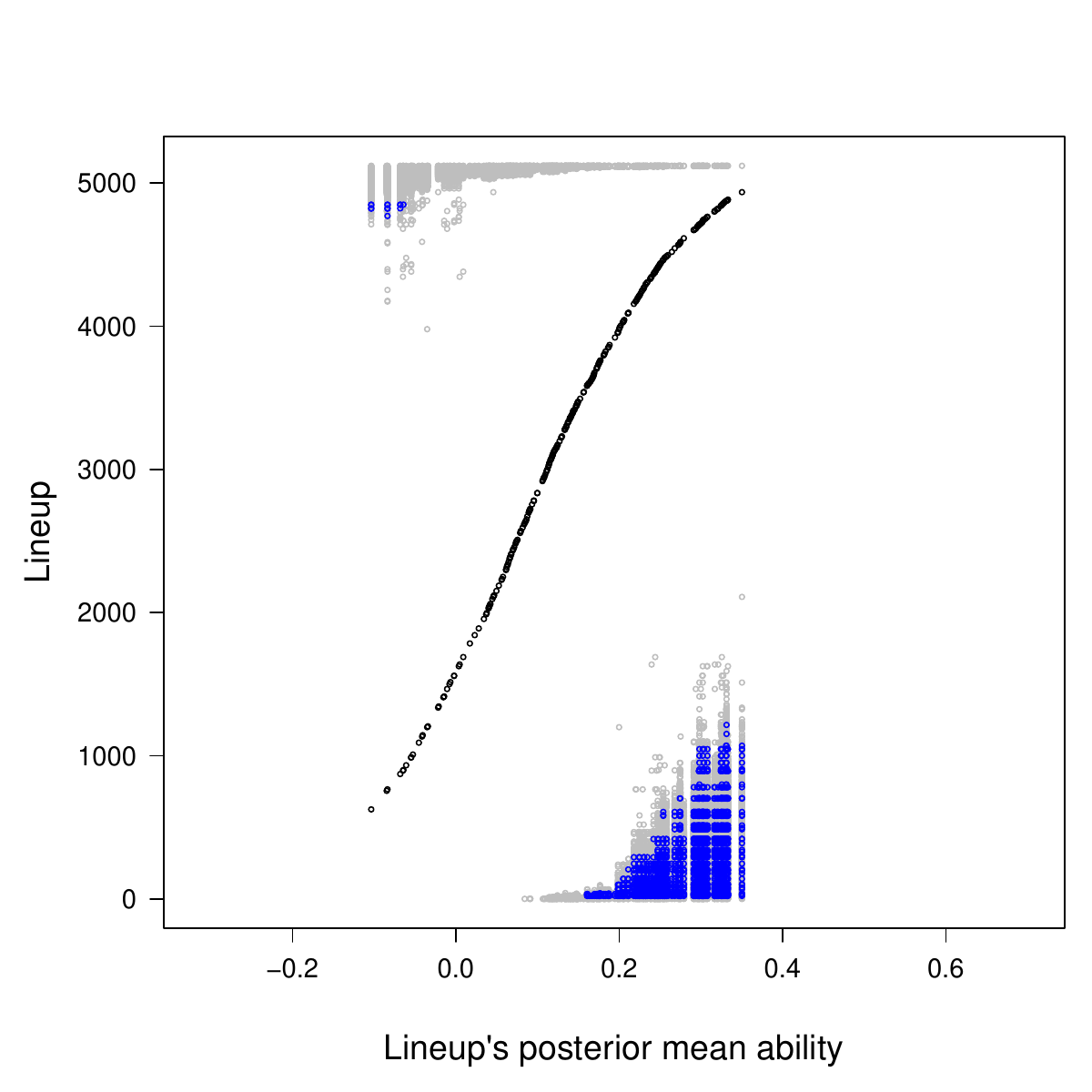}
\caption{Graphical display of the 301 lineups that the Utah Jazz employed during the 2009-2010 NBA season. Blue dots correspond to lineups from the Denver Nuggets. Left and right panels display global statements setting  $(\alpha,t, \gamma, q)$ equal to $(0.025,0.1,0.05,0.1)$ and $(0.05,0.25,0.05,0.25)$, respectively. Each graph should be interpreted vertically, where each column of dots represents a particular lineup and its corresponding local statement. Red dots represent lineups to which no comparisons with any other lineups can be made. Gray dots below the black one define lineups that are worse than the corresponding lineups and those above define the group of lineups that are better.}
\label{fig:utah_jazz}
\end{figure}

It is valuable for a coach to know which of their team's lineups should compete against particular lineups of the opposing team.  In the first round of the playoffs of the 2009-2010 NBA season, the Utah Jazz played against the Denver Nuggets. Each blue point in \cref{fig:utah_jazz} corresponds to a particular lineup from the Denver Nuggets.  Thus we see that the ``best'' Jazz lineups are better than a number of Denver Nugget lineups for $t=q=0.1$.  However, it is only when we allow more uncertainty regarding the accuracy of the local lineups that we see some Denver Nugget lineups that are better than any Utah Jazz lineups.

\subsection{Ordering statements for players}\label{sec:results_players}

Ideally, global statements would consist of dense local statements that have high posterior probability and small values for $\alpha$, $\gamma$, $t$ and $q$. Unfortunately, this turns out to be impossible to achieve when comparing abilities of all 510 players.  Figure \ref{fig:all_players_teamsS2} of the Supplementary Material displays the global statement of all 510 players for $t=q=0.1$ and $t=q=0.25$.  It took under a minute to compute these global statements.  The posterior probability is above $0.95$ in both cases, but local statements are very sparse, thus limiting the usefulness of the global statement. This is a consequence of the uncertainty associated with player ability estimates (see Figure \ref{fig:all_players_teamsS1} of the online Supplementary Material) and producing a global statement based on all 510 players. Each additional player included in the global statement introduces more uncertainty which propagates to the local statements.  

That being said, comparing all 510 players is typically not of interest as players can take on vastly different roles within a team framework.  To further illustrate our method, we apply our procedure to a subset of players whose usage rate (an estimate of the percentage of a team's possessions ended by a particular player's action) was in the top twenty among all players in the 2009-2010 season.  This permits us to compare players with similar roles across teams.   As a brief aside, we remark that player abilities are estimated borrowing information across players that play on the same team.  As a result, players who played on bad teams (for example, Golden State Warriors) were often members of lineups that lost many encounters (that is, negative difference between points scored and points allowed), and their ability was estimated to be generally lower than players on strong teams.  Thus,  small contributions to winning lineups result in a higher estimated ability relative to larger contributions to losing lineups.

The posterior distributions for players' abilities whose usage rate belonged to the top twenty during the 2009-2010 season are provided in the top-left plot of \cref{fig:usage_rateALL}.    For this subset of players, we fix $t=q=0$ as this will permit us to use the local statements to produce multiple orderings that can all be interpreted similarly to traditional rankings.  Note that unlike traditional rankings, the probability that all these rankings hold simultaneously is equal to the posterior probability of the global statement. Further,  the probability of any subset of these rankings holding simultaneously is lower bounded by the posterior probability of the global statement. To rank players using the global statement, we only need to look at the inherent transitivity of the local statements. That is, if player 1 is better than players 2 and 3, and player 3 better than player 2, then we can conclude that player 1 is better than player 3 is better than player 2 (i.e., $\xi_2<\xi_3<\xi_1$). Although the transitivity of the local statements can be used to create rankings of subsets of players when either $t \neq 0$ or $q \neq 0$ (that is, local and global errors are allowed), we refrained from exploring them as it is not evident how to quantify the inherited error for each of the derived rankings.

The global statement in the top-right plot of \cref{fig:usage_rateALL} has a posterior probability of $0.87$.  This statement is sparse and only produces rankings that involve at most two players (for example, $\xi_{\rm  James} >\xi_{\rm Ellis}$ and $\xi_{\rm Wade} > \xi_{\rm Ellis}$ , et cetera.).  Even though it is sparse, we can still conclude that the players with empty local statements (e.g., players with red dots ``POR\_Roy'' - ``CHI\_Rose'') are essentially exchangeable in terms of the their abilities.  However, their salaries are very different ranging from 1 million to 16 million USD a year.  Of course some of the discrepancy in salary is due to experience in the league.  But even players like Stephen Jackson and J. R. Smith who play very similar roles, using a similar style of play, and with a similar number of years in the league, differ in salary by more than 2 million USD.  Thus, our approach can provide general managers added information when making personnel decisions by indicating which players are very similar in their contributions to lineup success.  The global statement provided in the bottom left of \cref{fig:usage_rateALL} is less sparse with a posterior probability of $0.34$ and, given that this probability is low, we provide it more to contrast the sparser global statement.  In this global statement, Lebron James and Dwayne Wade were better than 10 and 9 players, respectively, and worse than none.  Monta Ellis, on the other hand, was worse than 11 players and better than none.     

To help visualize the relationships between players, the global statement can be plotted using a network graph. The bottom right plot of \cref{fig:usage_rateALL} displays a network graph that corresponds to the global statement in the bottom-left plot. Each node corresponds to a player. The network graph is directed in that edges connecting two players indicate which player is better by way of an arrow. Thus, any path of the graph represents a ranking of the nodes (that is, players).  From the network graph, its clear that Monta Ellis is ranked as the worst player among the twenty, as many arrows are pointing away from him. Also notice that Monta Ellis is ranked below Carmelo Anthony, who is ranked below Lebron James. So, we can conclude that $\xi_{\rm James} > \xi_{\rm Anthony} > \xi_{\rm Ellis}$.   Similarly, we can  conclude that $\xi_{\rm Durant} > \xi_{\rm Rose}$ and $\xi_{\rm Bryant} > \xi_{\rm Hamilton}$, but also that Bryant is neither better nor worse than James, et cetera. Note that all these ranking statements, or any combination of rankings from the global statement, have a joint posterior probability that is bounded below by $0.34$ (which is the posterior probability of the global statement).

Twenty players are displayed in the top left plot of \cref{fig:usage_rateALL}, where the order of the players on the $x$-axis corresponds to their median ability. Given that no ranking of the twenty players appears more than once among the $10^4$ MCMC iterates, any point estimate of the ranking will have posterior probability close to zero.  Even if focus is placed on rankings that only require James, Wade, and Durant to be in the top three regardless of order, the posterior probability is still only $9 \times 10^{-2}$. Similarly, the  posterior probability  of rankings where Ellis, Maggette, and Kaman are in the bottom three regardless of order is $10^{-2}$, and the posterior probability of rankings where both  James, Wade, and Durant are in the top three and  Ellis, Maggette, and Kaman are in the bottom three simultaneously is $7 \times 10^{-4}$.  In contrast, our global statement communicates a similar message about these six players but with a much higher probability ($3.4 \times 10^{-1}$ compared to $2.5 \times 10^{-3}$). Notice that our global statement indicates that, for example, James, Wade, and Durant are the players who have the largest number of players that have a lower ability (see left bottom plot of \cref{fig:usage_rateALL}). Further, the global statement is saying that these three players are better than the seven players below Smith, and not that they have an ability that is consistently greater than the ability of all other 17 players.

\begin{figure}
\centering
\includegraphics[scale=0.4, trim={0 0cm 0cm 0cm},clip=true,page=1]{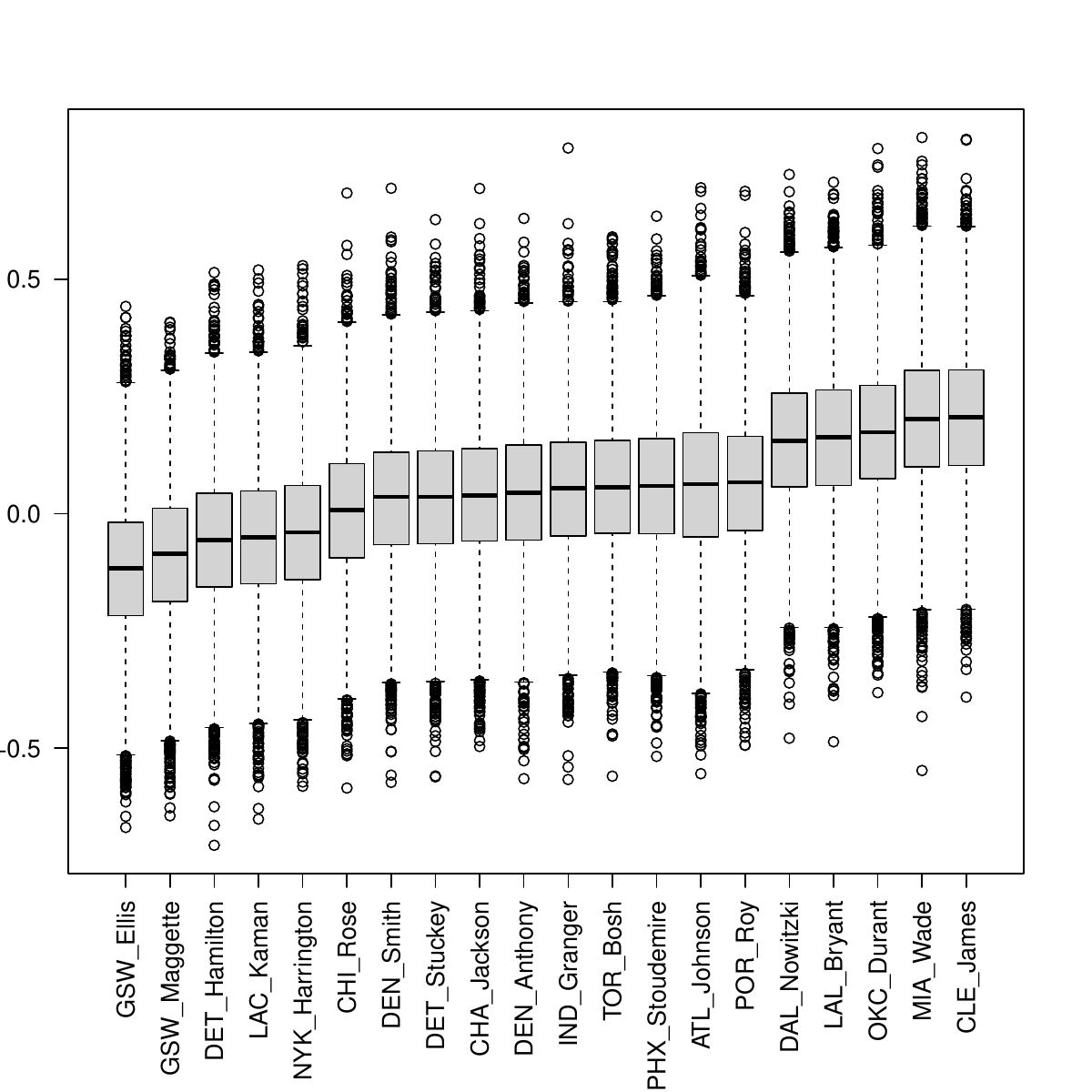}
\includegraphics[scale=0.4, trim={0 0cm 0cm 0cm},clip=true,page=2]{plots/usageRateALL4.pdf}
\includegraphics[scale=0.4, trim={0 0cm 0cm 0cm},clip=true,page=3]{plots/usageRateALL4.pdf}
\includegraphics[scale=0.8, trim={5mm -5mm 6mm 6mm},clip=true]{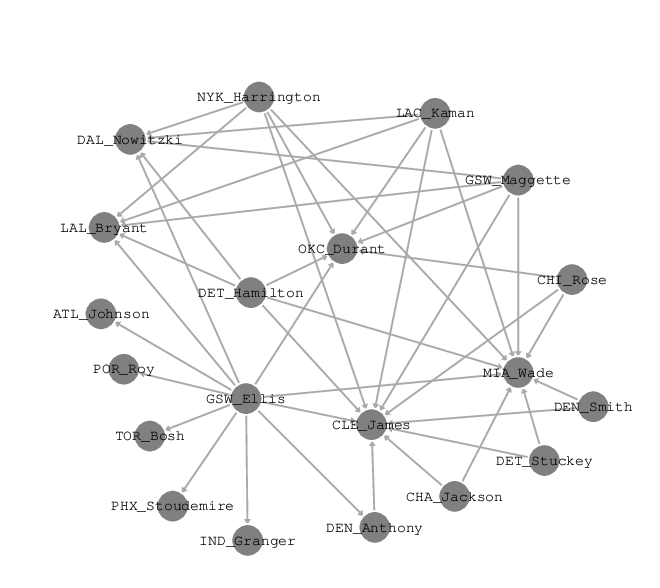}
\caption{
Results from players that were in the top twenty in terms of usage rate. The top-left plot corresponds to the posterior distributions of the player abilities for each of the twenty players. The top-right and bottom-left plots display the global statements setting  $(\alpha,t, \gamma, q)$ equal to $(0.0225,0,0.05,0)$ and $(0.09,0,0.9,0)$, respectively. These two plots can be interpreted similarly as \cref{fig:all_playoff_teams}.  The bottom-right plot displays the global statement in the bottom left-plot using a network graph. Arrows point to nodes (players) that are better than those represented by the respective nodes.
}
\label{fig:usage_rateALL}
\end{figure}

\section{Discussion}\label{sec:discussion}

We have presented a novel procedure to find statements associated with orderings of parameters (for example, player and lineup abilities). Relying on a decision theoretic framework, the proposed approach generates statements associated with high posterior probability while incorporating as many players/lineups into the statements as possible. This represents a new approach that is quite different from existing ranking methods.

Although we use a prior distribution that assigns zero probability to ties among parameters in the numerical illustrations, we can incorporate ties in the statements by using prior specifications that assign positive probability to them, (for example, spike-and-slab priors of \citealp{ishwaran2005spike}), or by defining ties based on practical criteria (for example, a tie is present if the maximum difference between parameters is below a given threshold). Moreover, it would be interesting to consider informative priors for player abilities. A possibility is to modify the prior mean for $\xi_l$ in \cref{sec:simulationsetting} to $\mu_l$, with $\mu_l$ depending on the official NBA ranking of the players in the previous year (that is, have a different prior mean for each player ability).

From an NBA general manager's perspective, it may seem unsatisfactory that the global statements are silent about some players.  This could be particularly true when employing the lineup data to make personnel decisions (for example, deciding between which free-agents to pursue). In this setting, our procedure can focus specifically on the small subset of players of interest.  Posterior probabilities of the form $\PP(\xi_l > \xi_{l'} \mid \data)$ can be useful for deciding whether to trade players. Since there is often a monetary cost involved in trading, this can be incorporated as well to calculate the expected gain in trading players. 

\section*{Acknowledgement}\label{sec.Acknowledgements}

This work was supported by grant W911NF-16-1-0544 of the U.S. Army Research Institute for the Behavioral and Social Sciences (ARI).  The third author gratefully acknowledges support from the Basque Government through the BERC 2018-2021 program, by the Spanish Ministry of Science, Innovation and Universities through BCAM Severo Ochoa accreditation SEV-2017-0718.

\appendix 
\section{Algorithm for local and global statements}\label{app:algorithm}


From an empirical perspective, assume we have samples $\bxi^{(1)}, \dots, \bxi^{(M)} \sim \PP$, which can be the output from an Markov Chain Monte Carlo algorithm, where $\bxi^{(i)} = (\xi_1^{(i)}, \dots, \xi_{\ntotal}^{(i)})$, $i=1, \dots, M$.
The sample versions of the local sets $\overline{\A}_{\pidx,\alpha}$ and $\underline{\A}_{\pidx,\alpha}$ are 
\begin{align}  \label{eq.Adef_sample}
\begin{aligned}
\overline{\A}_{\pidx,\alpha}^{\mathrm{sample}}
& = 
\left \{ \pidx' \, : \, \frac1M \sum_{i=1}^M \indicator ( \xi^{(i)}_{\pidx'}>\xi^{(i)}_{\pidx} ) > 1 - \alpha  \right \}, 
\\
\underline{\A}_{\pidx,\alpha}^{\mathrm{sample}}
& =
\left \{ \pidx' \, : \, \frac1M \sum_{i=1}^M \indicator  ( \xi^{(i)}_{\pidx}>\xi^{(i)}_{\pidx'}  ) > 1 - \alpha  \right \};
\end{aligned}
\end{align}
finding these can be trivially parallelised for each $\pidx = 1, \dots, \ntotal$. 
Using \cref{eq.Adef_sample}, we verify whether the local statement \eqref{eq.Adef2} holds at each MCMC iteration by  
\begin{align} \label{eq.Adef2_sample} 
\indicator(A_{\pidx,\alpha, t}^{(i), \, \mathrm{sample}})
& = 
\indicator\left ( 
\frac{ \sum_{\pidx' \in \overline{A}_{\pidx, \alpha}} \indicator (\xi_{\pidx'}^{(i)} > \xi_\pidx^{(i)})  + \sum_{\pidx' \in \underline{A}_{\pidx, \alpha}} \indicator ( \xi_{\pidx}^{(i)} > \xi_{\pidx'}^{(i)} ) 
}{|\underline{\A}_{\pidx,\alpha} \cup \overline{\A}_{\pidx,\alpha}|} > 1-t
\right ) \in \{0,1\},
\end{align}
and finding this is also trivially parallelisable for each $i = 1, \dots, M$. The sample version of the set \eqref{eq.Gdef} is 
\begin{align} \label{eq.Gdef_sample}
\G_{\alpha, t, \gamma}^{\mathrm{sample}}
& = 
\left \{ 
l \, : \, \frac1M \sum_{i=1}^M \indicator ( A_{\pidx,\alpha, t}^{(i), \, \mathrm{sample}} )
\geq 
1-\gamma \right \},
\end{align}
which is again parallelisable as each term in the sum can be computed independently of each other. For any $q \in [0,1]$, this leads to a global statement given by \cref{eq.global_statement}. The probability of this global statement being true is estimated from the samples as 
\begin{align} \label{eq.global_prob_sample}
\frac1M \sum_{i=1}^M  \indicator \left \{ \frac{ \sum_{\pidx \in \G_{\alpha, t, \gamma}^{\mathrm{sample}}} \indicator (A_{\pidx,\alpha, t}^{(i), \, \mathrm{sample}})}
{|\G_{\alpha,t,\gamma}^{\mathrm{sample}}|} > 1-q   \right \}.
\end{align}
This leads to the following \cref{alg.ranking}, which along with \cref{eq.global_prob_sample} provides a complete recipe for evaluating the global statements. 
\begin{algorithm}
\caption{Bayesian global ordering statements.} 
\label{alg.ranking}

\textbf{Input:} Samples $\bxi^{(1)}, \dots, \bxi^{(M)} \sim \PP$, input parameters $(\pidx, \alpha, t, \gamma) \in [0,1]^4$.

\begin{algorithmic}[1] 

\FOR{$\pidx = 1, \ldots, \ntotal$}

\STATE 
Compute $\overline{\A}_{\pidx,\alpha}^{\mathrm{sample}}$ and $\underline{\A}_{\pidx,\alpha}^{\mathrm{sample}}$ by \cref{eq.Adef_sample}.

\ENDFOR 

\STATE 
Compute $\G_{\alpha, t, \gamma}^{\mathrm{sample}}$ by \cref{eq.Gdef_sample}.

\STATE 
Define $G_{\alpha,t, \gamma, q}$ by \cref{eq.global_statement}. 

\end{algorithmic}

\textbf{Output:} Global statement $G_{\alpha,t, \gamma, q}$, local sets $\overline{\A}_{\pidx,\alpha}$ and $\underline{\A}_{\pidx,\alpha}$, and global set $\G_{\alpha,t, \gamma}$.
\end{algorithm}

\section{Algorithm for optimal global statement} \label{app:opt}

We describe an algorithm for approximating the optimal global statement, where the optima is as described in \cref{sec.decision} of the main text. For computational reasons, we restrict the search to a fine grid for $\alpha$, $0 = \alpha_0 < \cdots < \alpha_k = \alpha_{\max}$. 
Evaluating equations \eqref{eq.Adef_sample} and \eqref{eq.Adef2_sample} for $\alpha \in \{\alpha_0, \dots, \alpha_k\}$ are the most expensive part of \cref{alg.ranking}, and these are performed at the start. The objective function \eqref{eq.reward_fn} we seek to maximize is non-convex and its gradients cannot be calculated, and we adopt a variant of the pattern search algorithm \citep{hooke1961direct} to optimize it. 
To ensure a minimum credible level, we can restrict the search to global statements whose posterior probability \eqref{eq.global_prob_sample} is above given threshold $1-\epsilon$, for $\epsilon \in (0,1)$, which is achieved by setting $R(a)$ at zero if the posterior probability exceeds $1-\epsilon$. In addition, we restrict the search to $(\alpha,t,\gamma,q) \in [0,0.05] \times [0,0.1] \times [0,0.5] \times [0,0.1]$ in order to obtain statements with low local and global errors. 
For a point $(\alpha, t, \gamma, q)$ and step-size $\delta>0$, we use the notation $(\alpha \pm \delta, t \pm \delta, \gamma \pm \delta, q \pm \delta)$ to refer to the set of the $3^4$ coordinate-wise perturbations of size $\delta$ around $(\alpha, t, \gamma, q)$, where the perturbed point is set to the boundary of $[0,0.05] \times [0,0.1] \times [0,0.5] \times [0,0.05]$ if it crosses it.
An algorithm to find a local optima given a starting point $(\alpha^{(0)}, t^{(0)}, \gamma^{(0)}, q^{(0)})$ is provided in \cref{alg.opt}. We run \cref{alg.opt} with multiple starting points and choose the optimal global statement among these. 

\begin{algorithm}
\caption{Variant of pattern search algorithm.} 
\label{alg.opt}

\textbf{Input:} Samples $\bxi^{(1)}, \dots, \bxi^{(M)}$, grid $0 = \alpha_0 < \cdots < \alpha_k = \alpha_{\max}$, initial point $a^{(0)} = (\alpha^{(0)}, t^{(0)}, \gamma^{(0)}, q^{(0)})$ such that $\alpha^{(0)} \in \{\alpha_0, \dots, \alpha_k\}$, initial step-size $\delta$, minimum step-size $\delta_{\min}$. 

\begin{algorithmic}[1] 

\STATE 
Evaluate equations \eqref{eq.Adef_sample} and \eqref{eq.Adef2_sample} for $\alpha \in \{\alpha_0, \dots, \alpha_k\}$.

\FOR{$s=1,2, \dots$}

\STATE 
Evaluate objective function at points $(\alpha^{(s-1)} \pm \delta, t^{(s-1)} \pm \delta, \gamma^{(s-1)} \pm \delta, q^{(s-1)} \pm \delta)$.

\STATE 
Set $a^{(s)} = (\alpha^{(s)}, t^{(s)}, \gamma^{(s)}, q^{(s)}) = \argmax_{a \in (\alpha^{(s-1)} \pm \delta, \, t^{(s-1)} \pm \delta, \, \gamma^{(s-1)} \pm \delta, \, q^{(s-1)} \pm \delta)} R(a)$. 

\IF{$R(a^{(s)}) = R(a^{(s-1)})$}

\IF{$\delta/2 < \delta_{\min}$}

\STATE 
End algorithm.

\ELSE 

\STATE 
Set $\delta = \delta/2$.

\STATE 
Go to step 3.  

\ENDIF 
\ENDIF 

\ENDFOR 

\end{algorithmic}

\textbf{Output:} Optimal action $a^{(s)}$and associated global statement $G_{\alpha^{(s)}, \, t^{(s)}, \, \gamma^{(s)}, \, q^{(s)}}$.
\end{algorithm}

\bibliographystyle{asa}
\bibliography{reference}

\makeatletter 
\renewcommand{\thefigure}{S\@arabic\c@figure}
\renewcommand{\thesection}{S\@arabic\c@section}
\makeatother

\newpage 
\setcounter{section}{0}
\setcounter{figure}{0}

\begin{center}
\textbf{\Large Supplementary Material for ``Bayesian inferences on uncertain ranks and orderings: Application to ranking players and lineups''}
\end{center}

This document contains supplementary material for the article ``Bayesian inferences on uncertain ranks and orderings: Application to ranking players and lineups''. 
Section \ref{supp.figs} contains simulation results for estimated player and lineup abilities as is discussed in Section 4.3 of the main text. 
Section \ref{supp.player.comparison} contains the posterior abilities of all the 510 players in the 2009-10 NBA season as well as the global statement for all the players for $t=q=0.1$ and $t=q=0.25$.

\section{Complementary figures for the simulation study} \label{supp.figs}

The results obtained for players and lineups for the 50 simulated datasets share similar patterns.  For this reason,  results associated with posterior probabilities, frequentist coverage, and average number of elements in the global statements for lineups  are presented in the main manuscript, while those for players are presented in \cref{fig:simulation.resultsS1} of this section.  Local and global errors for statements associated with players and lineups are both reported in \cref{fig:simulation.resultsS2}. Notice that, when using a prior that promotes shrinkage towards the mean (such as prior 3), the global statements are more sparse. Such sparsity leads to more conservative statements which in turns improves the frequentist coverage without having too much effect on the posterior probability of the statement, see \cref{fig:simulation.resultsS1}.  \cref{fig:simulation.resultsS2} shows how sparsity is closely related to the local and global errors, i.e., the lower the errors the more conservative the statements are.  Figures \ref{fig:simulation.resultsS1} and \cref{fig:simulation.resultsS2} also shows that, to find less sparse statements, reduce the errors, and improve the frequentest coverage, we will need more accurate ability estimates which can be achieved by collecting more data (i.e., moving from $s=1$ to $s=2$).

\begin{figure}[H]
\centering
\includegraphics[scale=0.45, trim={0 0cm 4.5cm 0cm},clip=true,page=1]{plots/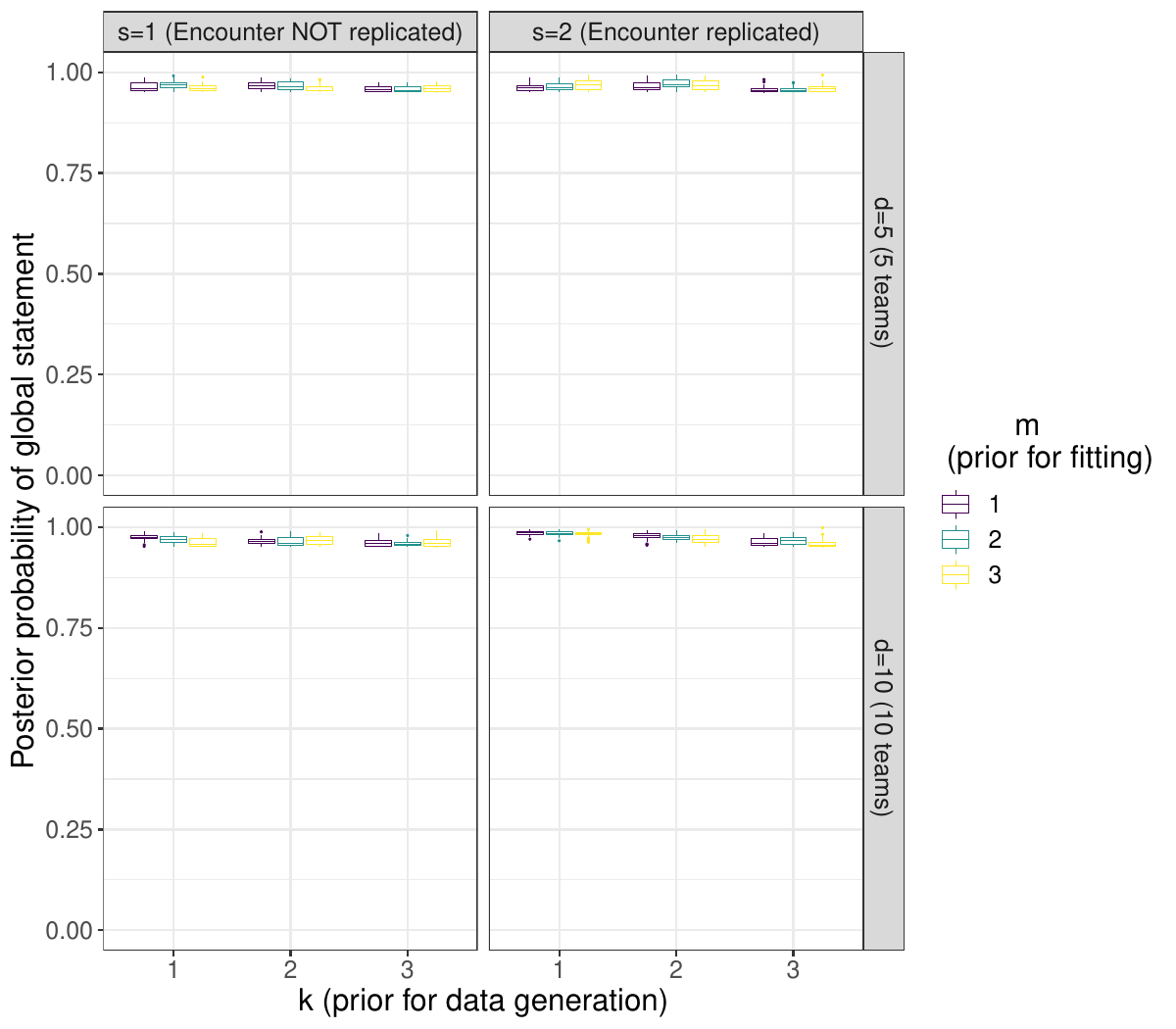}
\includegraphics[scale=0.45, trim={0 0cm 0cm 0cm},clip=true,page=2]{plots/FigurePlayers_JRSSC.pdf}
\includegraphics[scale=0.45, trim={0 0cm 4.5cm -1cm},clip=true,page=3]{plots/FigurePlayers_JRSSC.pdf}
\includegraphics[scale=0.45, trim={0 0cm 0cm -1cm},clip=true,page=4]{plots/FigurePlayers_JRSSC.pdf}
\caption{Simulation results for players assessing the effect of prior distribution used for fitting ($m$), prior distribution used for data generation ($k$), number of parameters ($d$), and doubling the sample size ($s$), with $\bv = (k,d,s,m)$.  
Top-left boxplot displays the posterior probability of the global statements;
top-right plot shows the frequentis coverage;
bottom-left boxplot presents the number of local statements within global statements;
bottom-right boxplot shows the distribution across global statements of the average number of lineups within local statements. The displayed plots summaries results over 50 synthetic datasets.}
\label{fig:simulation.resultsS1}
\end{figure}

\begin{figure}[H]
\centering
\includegraphics[scale=0.45, trim={0 0cm 4.5cm 0cm},clip=true,page=5]{plots/FigureLineups_JRSSC.pdf}
\includegraphics[scale=0.45, trim={0 0cm 0cm 0cm},clip=true,page=6]{plots/FigureLineups_JRSSC.pdf}
\includegraphics[scale=0.45, trim={0 0cm 4.5cm -1cm},clip=true,page=5]{plots/FigurePlayers_JRSSC.pdf}
\includegraphics[scale=0.45, trim={0 0cm 0cm -1cm},clip=true,page=6]{plots/FigurePlayers_JRSSC.pdf}
\caption{Simulation results for the effect of prior distribution ($m$), similarity among ``true'' abilities ($k$), number of parameters ($d$), and doubling the sample size ($s$), with $\bv = (k,d,s,m)$.
Plots in first and second columns of panels correspond to the global and local errors of the statements across fifty synthetic datasets, respectively.}
\label{fig:simulation.resultsS2}
\end{figure}

Figure \ref{fig:simulation.results.CPCR} displays the results under the cumulative probability consensus ranking (CPCR) methodology proposed by \cite{vitelli2018probabilistic}. As observed for lineups (see Section \ref{sec:simulation.resuls} of the main manuscript), posterior probability and frequentist coverage for each CPCR are similar for players with values less than or around $0.3$ for top players. Figure \ref{fig:simulation.results.players.joint.CPCR} shows the effect that $L_0$ has on the probabilities for joint CPCRs for players defined as $\bigcap_{l=L_0}^L [\xi_{(1:l)}\mbox{ ranked $1$st, $2$nd, $\ldots$, or $l$th}]$. The larger $L_0$ (less players in the joint statement), the higher the probability. Similar to lineups, these probabilities quickly decay to zero as $L_0$ decreases. Notice that joint CPCRs concentrating a probability close to $0.9$ require $L_0 = 76$ for $d = 5$ ($5$ teams = $78$ players) and  $L_0 = 152$ for $d = 10$ ($10$ teams = $167$ players). For the case $L_0 = 152$ with $d = 5$, the most precise statement in $\bigcap_{l=152}^{167} [\xi_{(1:l)}\mbox{ ranked $1$, $2$, $\ldots$, or $l$}]$ states that player $\xi_{(1:152)}$ is ranked $1$st, $2$nd, $\ldots$, or $152$th out of $167$ players, which again does not convey too much information. Compared to our proposed global statements, the lack of precision could be a disadvantage when using joint CPCRs in scenarios with high uncertainty.

\begin{figure}
\centering
\includegraphics[scale=0.45, trim={0 0cm 4.5cm 0cm},clip=true,page=1]{plots/FigureCPCR_Lineups_BA.pdf}
\includegraphics[scale=0.45, trim={0 0cm 0cm 0cm},clip=true,page=2]{plots/FigureCPCR_Lineups_BA.pdf}
\includegraphics[scale=0.45, trim={0 0cm 4.5cm 0cm},clip=true,page=1]{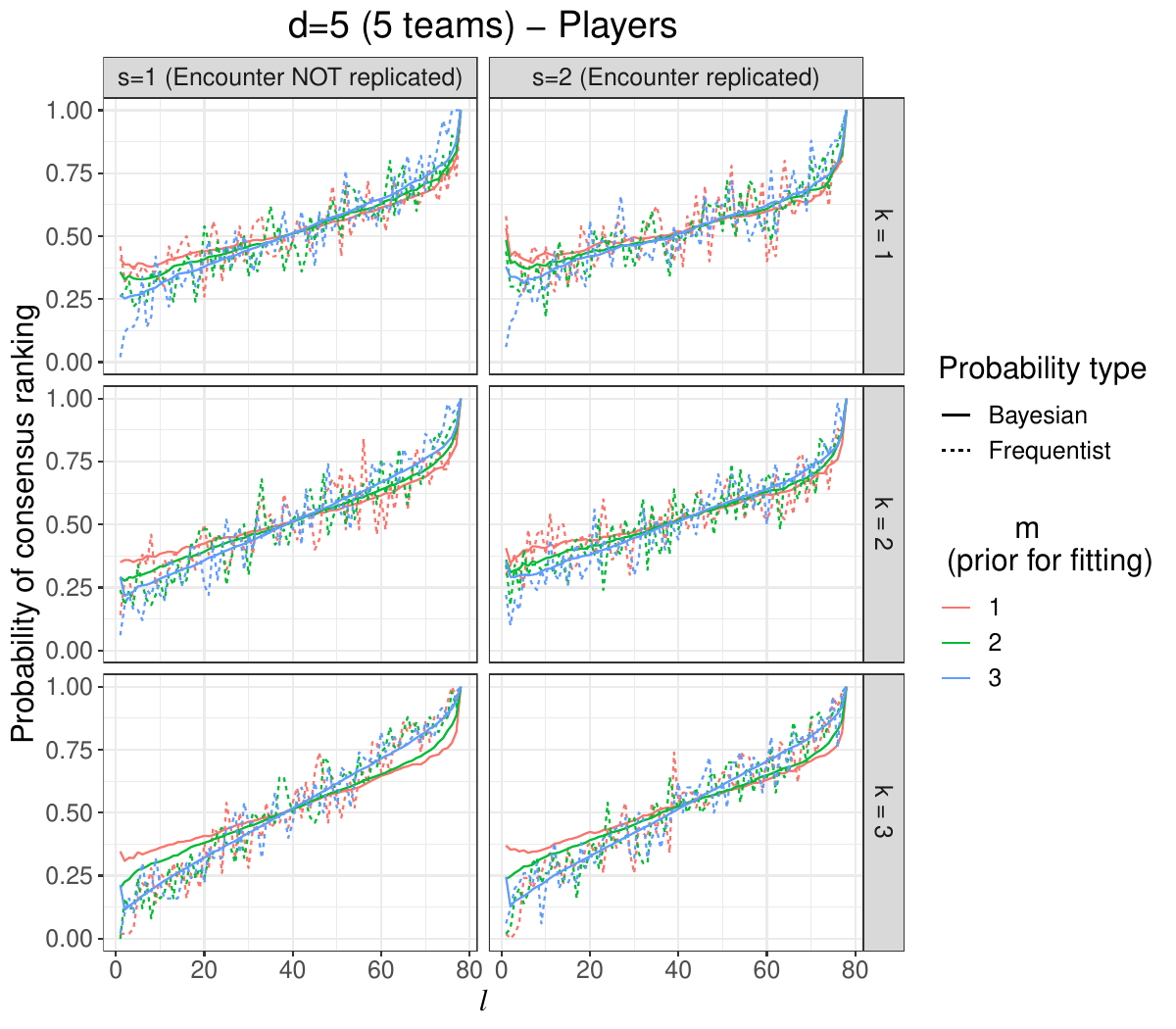}
\includegraphics[scale=0.45, trim={0 0cm 0cm 0cm},clip=true,page=2]{plots/FigureCPCR_Players_BA.pdf}
\caption{Simulation results for the CPCRs for lineups and players assessing the effect of prior distribution used for fitting (captured by $m$), prior distribution used for data generation (captured by $k$), number of parameters (captured by $d$), and doubling the sample size (captured by $s$) with $\bv = (k,d,s,m)$.  
Each plot displays the average posterior probability and frequentist coverage of each CPCR (i.e., $[\xi_{(1:l)}\mbox{ ranked $1$, $2$, $\ldots$, or $l$}]$ for players and $[\tau_{(1:l)}\mbox{ ranked $1$, $2$, $\ldots$, or $T$}]$ for lineups). These plots summarize results over 50 synthetic datasets.}
\label{fig:simulation.results.CPCR}
\end{figure}

\begin{figure}
\centering
\includegraphics[scale=0.45, trim={0 0cm 4.5cm 0cm},clip=true,page=3]{plots/FigureCPCR_Players_BA.pdf}
\includegraphics[scale=0.45, trim={0 0cm 0cm 0cm},clip=true,page=4]{plots/FigureCPCR_Players_BA.pdf}
\caption{  Simulation results for joint CPCRs for players assessing the effect of prior distribution used for fitting (captured by $m$), prior distribution used for data generation (captured by $k$), number of parameters (captured by $d$), and doubling the sample size (captured by $s$) with $\bv = (k,d,s,m)$.  Each plot displays the average posterior probability and frequentist coverage of the joint CPCRs (i.e., $\bigcap_{l=L_0}^L [\xi_{(1:l)}\mbox{ ranked $1$, $2$, $\ldots$, or $l$}]$). These plots summarize results over 50 synthetic datasets.}
\label{fig:simulation.results.players.joint.CPCR}
\end{figure}

Lastly, in our simulations we also considered the case of encounters being comprised of lineups that didn't appear in the NBA data.  That is, lineups that were not observed.  Ordering hypothetical lineups is clearly of interest to general managers and coaches as they work through the process of team building.  To generate encounters that involved unobserved lineups we focused on two teams; Cleveland and Minnesota. For each  simulated dataset, we generated 50 unobserved lineups for each team (note that the ``true'' ability of each unobserved lineup is still known).  We then selected the 10 best lineups from Cleveland (based on the total ability of the five players) and 10 lineups from Minnesota with the lowest ability.  Using these twenty lineups we run our ordering procedure comparing lineups within Minnesota and Cleveland and also between the two teams.  Thus, for each synthetic dataset, we obtain a global statement based on the 20 unobserved lineups and compute its posterior probability.  As can be seen in the top left plot of Figure \ref{fig:simulation.results.unobserved.lineups} the global statements were all accompanied with large posterior probability.  The other three plots in Figure \ref{fig:simulation.results.unobserved.lineups} mirror results associated with the simulation that was described in the main document.  Mainly, that increasing the number of encounters and teams increases the density of the local and global statements regardless of the prior employed to estimate player abilities.  And priors that are more diffuse with regards to player abilities produce global and local statements that are more sparse, but also more accurate.
\begin{figure}[H]
\centering
\includegraphics[scale=0.45, trim={0 0cm 4.5cm 0cm},clip=true,page=1]{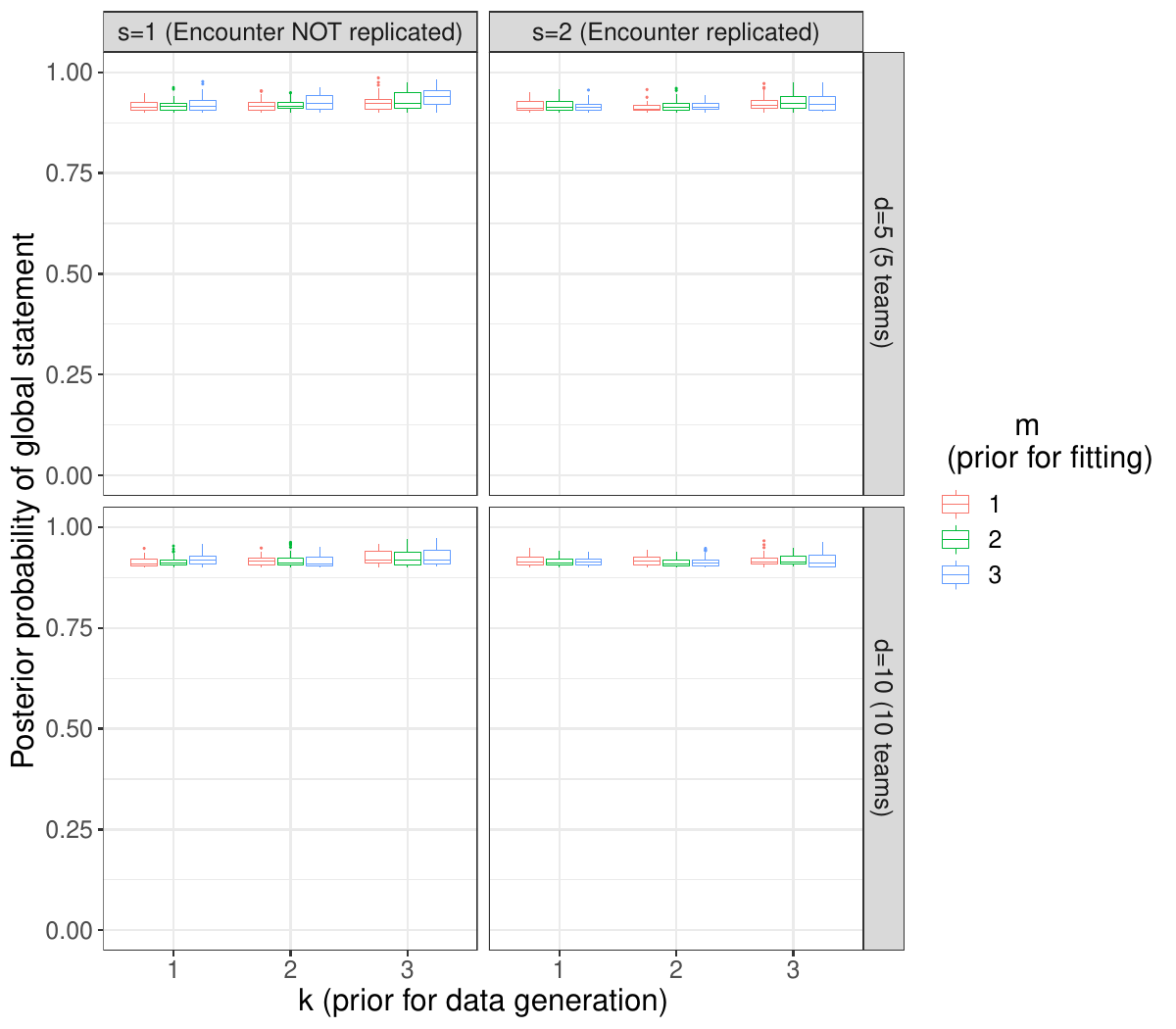}
\includegraphics[scale=0.45, trim={0 0cm 0cm 0cm},clip=true,page=2]{plots/FigureUnobservedLineups_BA.pdf}
\includegraphics[scale=0.45, trim={0 0cm 4.5cm -1cm},clip=true,page=3]{plots/FigureUnobservedLineups_BA.pdf}
\includegraphics[scale=0.45, trim={0 0cm 0cm -1cm},clip=true,page=4]{plots/FigureUnobservedLineups_BA.pdf}
\caption{ Simulation results for unobserved lineups within Minnesota and Cleveland. The displayed results assess the effect of prior distribution used for fitting ($m$), prior distribution used for data generation ($k$), number of parameters ($d$), and doubling the sample size ($s$), with $\bv = (k,d,s,m)$.  
Top-left boxplot displays the posterior probability of the global statements;
top-right plot shows the frequentis coverage;
bottom-left boxplot presents the number of local statements within global statements;
bottom-right boxplot shows the distribution across global statements of the average number of lineups within local statements. The displayed plots summaries results for $10$ unobserved lineups for each and over 50 synthetic datasets.}
\label{fig:simulation.results.unobserved.lineups}
\end{figure}

\section{Additional results for player comparisons application} \label{supp.player.comparison}

We applied the proposed approach to find global statements for the player abilities $\xi_{l}$s. Figure \ref{fig:all_players_teamsS1} shows the caterpillar plot associated with the posterior interquartile range for each of the $\xi_{l}$ given the NBA lineup data. Notice the large amount of overlap in player abilities making any ranking highly uncertain.
Jamal Crawford of the Atlanta Hawks was selected as the reference player (he came first after alphabetizing team names and player names).  Thus, these ability interquartile ranges are relative to Crawford's.  
\begin{figure}[H]
\centering
\includegraphics[width=0.7\textwidth, trim={0 0cm 0cm 2cm},clip=true]{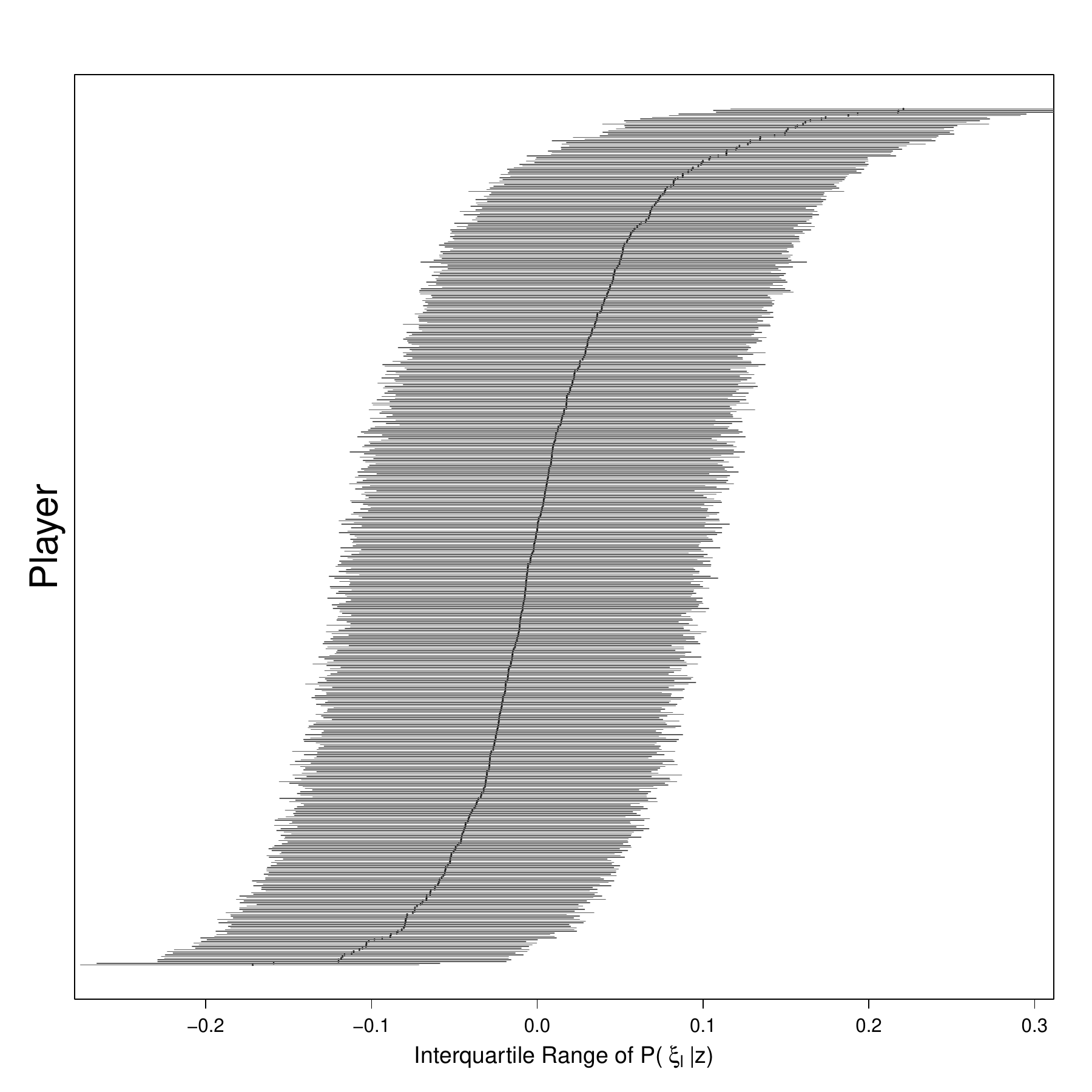}
\caption{Graphical depiction of interquartile range associated with each player's ability's posterior distribution.}
\label{fig:all_players_teamsS1}
\end{figure}

Figure \ref{fig:all_players_teamsS2} displays the global statements for all 510 players.  The left plot depicts results for $t=q=0.1$.  The posterior probability of this global statement is 0.9, but the statement is very sparse.  In fact, there are only a handful of players whose local statement is not empty and little can be said if this amount of precision is needed.   The right plot corresponds to the player global state for $t=q=0.25$ and has posterior probability 0.999.  Taking on the added imprecision does result in a more sparse global statement, but about half of the 510 players still have empty local statements.  Since the spread of each player's posterior distribution is large, each additional player that is included in the global statement introduces quite a bit of uncertainty.  As a result global statements that include all 510 players will be sparse regardless of the values ultimately used for $t$, $q$, $\alpha$, and $\gamma$.
\begin{figure}[H]
\centering
\includegraphics[scale=0.4, trim={0cm 0cm 0cm 0cm},clip=true]{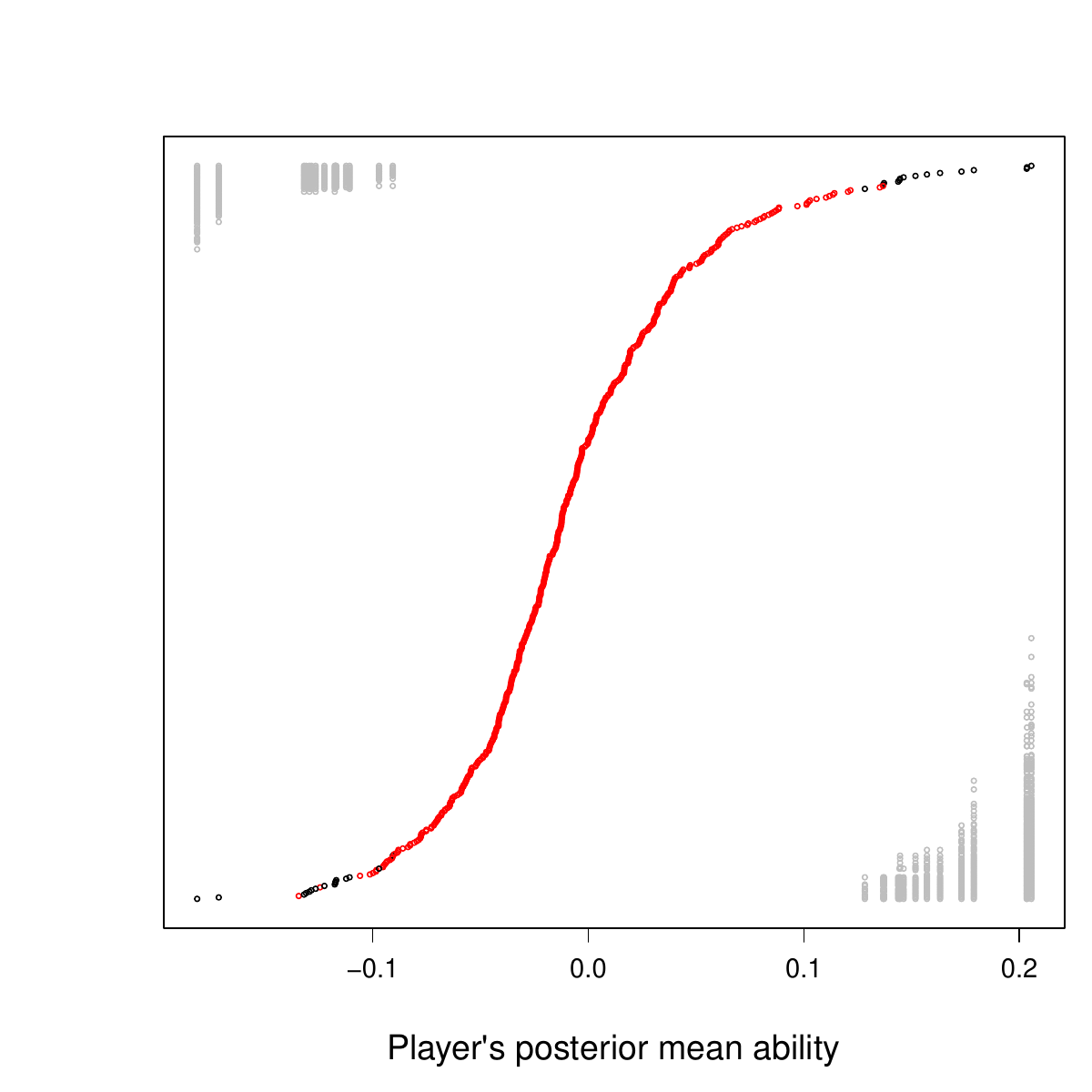}
\includegraphics[scale=0.4, trim={0cm 0cm 0cm 0cm},clip=true]{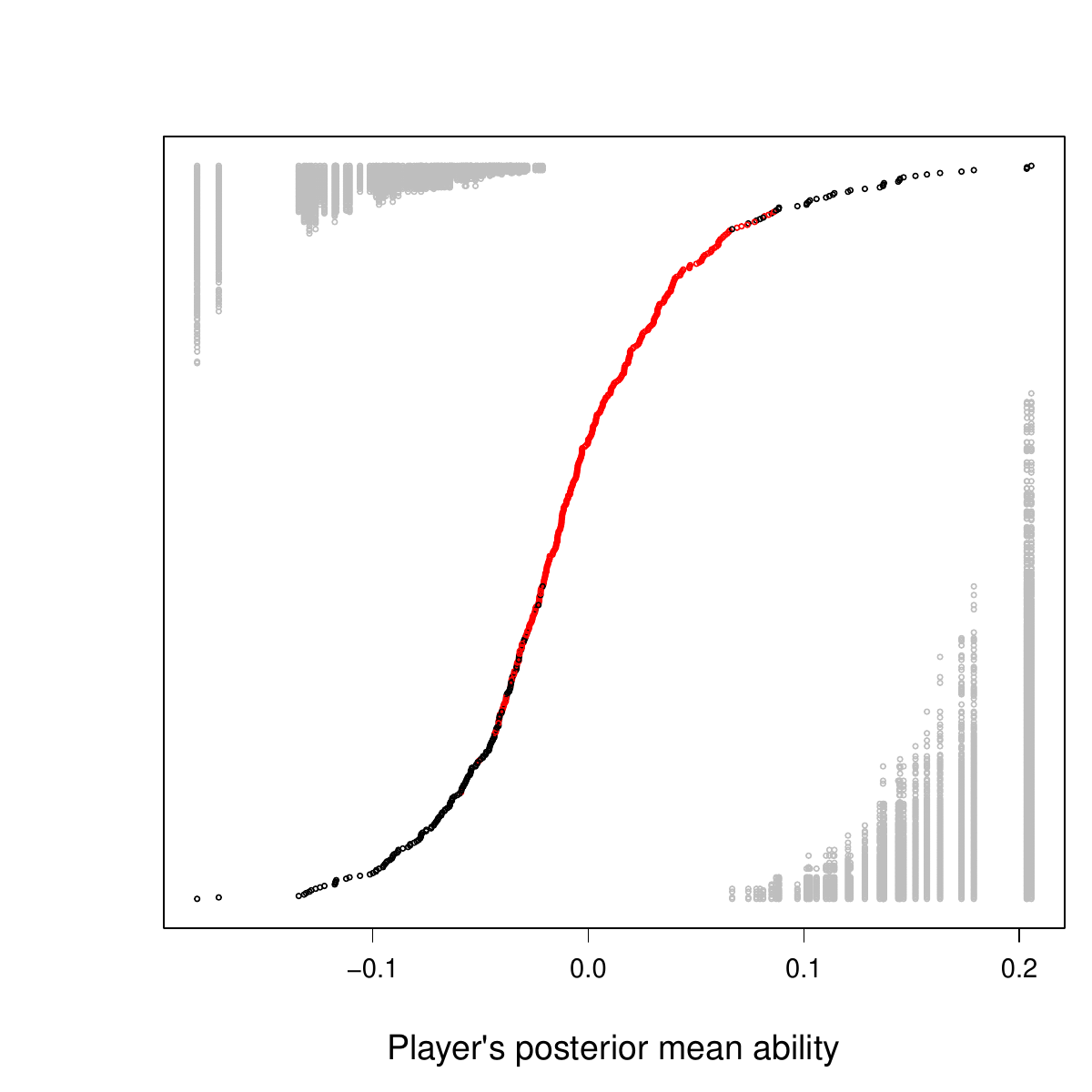}
\caption{Graphical depiction of global statement based on the 510 players that participated in an NBA game during the 2009-2010 season.  The left plot corresponds to $t = q = 0.1$ while the right to $t=q=0.25$ }
\label{fig:all_players_teamsS2}
\end{figure}

\end{document}